\documentclass[orivec,oribibl]{llncs}
\usepackage{llncsdoc}

\usepackage{setspace}
\usepackage{amsmath,amssymb}
\usepackage{color}
\usepackage{fullpage}
\newtheorem{fact}{Fact}

\newcommand{\remove}[1]{}

\newcommand{\pr}[2][{}]{\mathop{\mathrm{Pr}}_{{#1}}\!\left[ #2 \right]}
\newcommand{\expect}[2][{}]{\mathop{\mathbb{E}}_{#1}\!\left[ #2 \right]}
\newcommand{\set}[1]{\left\{#1\right\}}

\newcommand{\ceil}[1]{\left\lceil #1 \right\rceil}

\newcommand{\RLS}{\mathsf{RLS}} 
\newcommand{\QLS}{\mathsf{QLS}} 


\newcommand{\newfix}[1]{\textcolor{black}{#1}}
\newcommand{\fix}[1]{\textcolor{black}{#1}}

\title{Quantum and Randomized Lower Bounds for\\ 
Local Search on Vertex-Transitive Graphs}

\author{{\sc Hang Dinh} \and {\sc Alexander Russell}}

\institute{Department of Computer Science \& Engineering\\
University of Connecticut\\
Storrs, CT 06269, USA\\
\email{\{hangdt, acr\}@engr.uconn.edu}
}

\date{\today}               

\begin{document}
\maketitle

\begin{abstract}
We study the problem of \emph{local search} on a graph. Given a real-valued black-box function $f$ on the graph's vertices, this is the problem of determining a local minimum of $f$---a vertex $v$ for which $f(v)$ is no more than $f$ evaluated at any of $v$'s neighbors. In 1983, Aldous gave the first strong lower bounds for the problem, showing that any randomized algorithm requires  $\Omega(2^{n/2 - o(1)} )$ queries to determine a local minima on the $n$-dimensional hypercube. The next major step forward was not until 2004 when Aaronson, introducing a new method for query complexity bounds, both strengthened this lower bound to $\Omega(2^{n/2}/n^2)$ and gave an analogous lower bound on the quantum query complexity. While these bounds are very strong, they are known only for narrow families of graphs (hypercubes and grids). We show how to generalize Aaronson's techniques in order to give randomized (and quantum) lower bounds on the query complexity of local search for the family of vertex-transitive graphs. In particular, we show that for any vertex-transitive graph $G$ of $N$ vertices and diameter $d$, the randomized and quantum query complexities for local search on $G$ are $\Omega\left(\frac{\sqrt{N}}{d\log N}\right)$ and $\Omega\left(\frac{\sqrt[4]{N}}{\sqrt{d\log N}}\right)$, respectively.
\end{abstract}

\section{Introduction}
The \emph{local search} problem is that of determining a local minimum of a function defined on the vertices of a graph. Specifically, given a real-valued black-box function $f$ on the vertices of a graph, this is the problem of determining a vertex $v$ at which $f(v)$ is no more than $f$ evaluated at any of $v$'s neighbors. 
The problem provides an abstract framework for studying local search heuristics that have been
widely applied in combinatorial optimization, heuristics that typically combine random selection with steepest descent.
The performance of these heuristic algorithms, as recognized in \cite{Ref_Aardal97decade}, \emph{``was generally considered to be satisfactory, partly
based on experience, partly based on a belief in some physical or biological analogy, \ldots''}
Ideally, of course, we would evaluate practical results in the context of crisp theoretical bounds on the complexity of these problems! Moreover, as pointed out in~\cite{Ref_Aaronson04lower}, the complexity of the local search problem
is also central for understanding a series of complexity classes which are subclasses of the \emph{total function} class $\mathsf{TFNP}$, 
including $\mathsf{PPP}$ (Polynomial Pigeohole Principle),
$\mathsf{PODN}$ (Polynomial Odd-Degree Node), and $\mathsf{PLS}$ (Polynomial Local Search).

Local search has been the subject of a sizable body of theoretical work, in which complexity is typically measured by \emph{query complexity}: the total number of
queries made to the black-box function $f$ in order to find a local minimum. The first strong lower bounds were established in 1983 by Aldous~\cite{Ref_Aldous83minimization}, who showed that $2^{n/2-o(n)}$ queries are necessary, in general, in order for a randomized algorithm to find a local minimum of a function on the hypercube $\{0,1\}^n$. His proof constructs a rich collection of unimodal functions (that is, functions with a unique minimum) using hitting times of random walks. Llewellyn et al.~\cite{Ref_Llewellyn89local} improved the bound for deterministic query complexity to 
$\Omega(2^n/\sqrt{n})$ using an adversary argument characterized by vertex cuts.

With the advent of quantum computing, these black-box problems received renewed interest~\cite{Ref_Aaronson04lower,Ref_Zhang06new,Ref_Santha04quantum,Ref_Sun06quantum,Ref_Verhoeven06enhanced}. 
Most notably, Aaronson~\cite{Ref_Aaronson04lower} introduced a query lower bound method tuned for such problems, the \emph{relational adversary method}. Though his principal motivation was, no doubt, to provide \emph{quantum} lower bounds for local search, his techniques felicitously demonstrated improved bounds on randomized query complexity. In particular, he established a $\Omega(2^{n/2}n^{-2})$ lower bound for randomized local search
on the Boolean hypercube $\{0,1\}^n$ and the first nontrivial lower bound of $\Omega\left(n^{d/2-1}/ (d\log n)\right)$
for randomized local search on a $d$-dimensional grid $[n]^d$ with $d\geqslant 3$. 

These two lower bounds of Aaronson's have been recently improved by Zhang~\cite{Ref_Zhang06new}: refining Aaronson's framework, he established randomized query complexity lower bounds of $\Theta(2^{n/2}n^{1/2})$ on the hypercube and $\Theta(n^{d/2})$ on the grid $[n]^d$, $d\geqslant 4$. \fix{Additionally, Zhang's method can be applied to certain classes of product graphs, though it provides a rather complicated relationship between the lower bound and the product decomposition.\footnote{
Zhang \cite{Ref_Zhang06new}'s general lower bounds for a product graph $G_w\times G_c$ involve the length $L$ of the longest self-avoiding path in the ``clock'' graph $G_c$, and parameters $p(u,v,t)$'s of a regular random walk $W$ on $G_w$, where $p(u,v,t)$ is the probability that the random walk $W$ starting at $u$ ends up at $v$ after exactly $t$ steps.
In particular, he showed that $\RLS(G_w\times G_c)=\Omega\left(\frac{L}{\sum_{t=1}^{L/2} \max_{u,v}p(u,v,t)}\right)$ and $\QLS(G_w\times G_c)=\Omega\left(\frac{L}{\sum_{t=1}^{L/2} \sqrt{\max_{u,v}p(u,v,t)}}\right)$.}} A remaining hurdle in this direct line of research was to establish strong bounds for grids of small dimension.
Sun and Yao \cite{Ref_Sun06quantum} have \newfix{addressed} this problem, proving that the quantum query complexity is $\Omega \left( {n^{1/2 - c } } \right)$ for $[n]^2$ and $\Omega (n^{1- c } )$ for $[n]^3$, \newfix{for any fixed constant $c>0$.}
Focusing on general graphs, Santha and Szegedy~\cite{Ref_Santha04quantum} established quantum lower bounds of $\Omega(\log N)$ and 
$\Omega\left({\sqrt[8]{\tfrac{s}{\delta}}}/{\log N}\right)$ for local search on connected $N$-vertex graphs with maximal degree $\delta$ and \emph{separation number} $s$.\footnote{Santha and Szegedy define the \emph{separation number} $s(G)$ of a graph $G=(V,E)$ to be: 
$s(G)=\max\limits_{H\subset V} \min\limits_{S\subset H, |H|/4\leq |S|\leq 3|H|/4} |\partial_{H}S|$, where $\partial_{H}S=\set{v\in H\setminus S:\exists u\in S, (v,u)\in E}$ is the boundary of $S$ in the subgraph of $G$ restricted to $H$.} We remark that $s / \delta \leq \newfix{N}$ and these bounds for general graphs are, naturally, much weaker than those obtained for the highly structured families of graphs above.

In this article, we show off the flexibility of Aaronson's framework by extending it to arbitrary vertex-transitive graphs. Recall that a graph $G = (V, E)$ is \emph{vertex-transitive} if the automorphism group of $G$ acts transitively on the vertices: for any pair of vertices $x, y \in V$, there is a graph automorphism $\phi: V \rightarrow V$ for which $\phi(x) = y$. In particular, all Cayley graphs are vertex-transitive, so this class of graphs contains the \newfix{hypercubes of previous interest and the looped grids (tori)}.

Our lower bounds depend only on the size and diameter of the graph:

\begin{theorem}\label{Theorem_Vertex-Transitive}
Let $G$ be a connected, vertex-transitive graph with $N$ vertices and diameter $d$. Then 
\[
\RLS(G) = \Omega\left(\frac{\sqrt{N}}{d\log N}\right), \qquad\text{and}\qquad
\QLS(G) = \Omega\left(\frac{\sqrt[4]{N}}{\sqrt{d\log N}}\right)
\]
where $\RLS(G)$ and $\QLS(G)$ are the randomized and quantum query complexities of local search on $G$,  respectively.
\end{theorem}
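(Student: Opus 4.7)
My plan is to extend Aaronson's relational adversary method to general vertex-transitive graphs, relying on automorphism-based symmetrization rather than on the specific product structure exploited for the hypercube and grid. Three ingredients need to be built: a family of hard local-search instances parametrized by long self-avoiding paths in $G$; an adversary relation on these instances; and, crucially, a bound on the probability that a random path in the distribution visits a fixed vertex at a fixed time. The randomized and quantum versions of Aaronson's framework then translate these counts into lower bounds, with the square-root gap between $\RLS(G)$ and $\QLS(G)$ in Theorem~\ref{Theorem_Vertex-Transitive} arising from the different normalizations of the two adversary identities.

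For each self-avoiding path $P = (p_0, \ldots, p_L)$ of a length $L$ to be fixed later, I would define $f_P$ to decrease along $P$ and take large values off $P$, so that the unique local minimum of $f_P$ is $p_L$; finding this minimum essentially forces an algorithm to trace $P$ from $p_0$. The distribution on paths is obtained by choosing $p_0$ uniformly at random and then extending it via a procedure invariant under $\mathrm{Aut}(G)$; by vertex-transitivity, the marginal distribution of $p_t$ is uniform over $V$ for every $t$. The adversary relation then pairs $f_P$ with $f_Q$ whenever the two paths agree on an initial prefix and diverge afterward, so that distinguishing such a pair requires a query to a vertex appearing on the symmetric difference of $P$ and $Q$.

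The crux, and the main obstacle, is bounding the maximum weight of pairs distinguished by any single query: equivalently, the probability that a random path passes through a fixed vertex at a fixed time, together with a corresponding bound on how likely it is that two paths coupled by the adversary relation reconnect. By an orbit-averaging argument based on vertex-transitivity, the expected number of visits of a length-$L$ random path to a given vertex is exactly $L/N$, which yields a pointwise bound of order $1/N$ per time step. The diameter $d$ enters through the analysis of how quickly two paths that diverge at some step can return to a common vertex and thereby be distinguished by a single query --- on a graph of small diameter this is harder to prevent, inflating the per-query weight --- and the $\log N$ factor emerges as the cost of a union bound over the geometric scales of the path length. Balancing these contributions with the right choice of $L$ (of order roughly $\sqrt{N}/d$ up to logarithmic factors) yields the claimed $\RLS(G) = \Omega(\sqrt{N}/(d\log N))$; the quantum bound follows by applying the square-root version of the same identity, which replaces the $\sqrt{\cdot}$ normalization by $\sqrt[4]{\cdot}$.
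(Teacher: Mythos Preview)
Your high-level plan---hard instances $f_P$ indexed by walks, an adversary relation pairing walks that share a prefix, and the reduction to a hitting-probability bound---matches the paper's framework (Theorem~\ref{Theorem_Aaronson_Snakes}). But the proposal leaves the central construction unspecified, and the heuristics you offer for where $d$ and $\log N$ come from do not match what actually makes the argument go through.

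The paper does \emph{not} use self-avoiding paths and does not rely on a ``procedure invariant under $\mathrm{Aut}(G)$'' in the vague sense you describe. The snake is built in \emph{chunks} of length exactly $d$: fix a base vertex $v_0$, fix for each $v$ an extended shortest path $S(v)$ of length $d$ from $v_0$ to $v$, and fix for each $x$ an automorphism $\sigma_x$ with $\sigma_x(v_0)=x$. The $k$th chunk is $\sigma_{x_{dk}}S(u_k)$ for $u_k$ uniform in $V$. This buys two things your outline lacks: the chunk endpoints $x_{d},x_{2d},\ldots$ are \emph{exactly} uniform and \emph{independent} given the previous endpoint, and the tail is flicked only at chunk boundaries. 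The diameter $d$ enters not through any ``reconnection'' analysis but simply because each chunk has $d$ vertices, so $\expect[u]{P(u)}\le d/N$ where $P(x)=\pr[u]{x\in S(u)}$.

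Your orbit-averaging observation that the expected number of visits is $L/N$ is correct but insufficient: the $\epsilon$-hitting condition is a \emph{uniform} bound over all vertices $v$, and expectation alone does not give it. The paper uses the chunk independence to apply a Chernoff bound to $\sum_k P(\sigma_{x_{dk}}^{-1}(v))$ (Lemma~\ref{Lemma_Sparse_Uniform}), and the $\log N$ factor arises from the union bound over the $N$ choices of $v$ needed to make this hold simultaneously---not from ``geometric scales of the path length.'' Concretely, one needs $N\cdot 2^{-\ell\epsilon}$ small, forcing $\epsilon\gtrsim (\log N)/\ell$ with $\ell\approx\sqrt{N}/d$ chunks, whence $\epsilon=\Theta(d\log N/\sqrt{N})$ and the stated bounds via $\RLS=\Omega(1/\epsilon)$, $\QLS=\Omega(1/\sqrt{\epsilon})$. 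Without the chunk construction you have neither the independence for Chernoff nor a clean per-step hitting bound, and the argument as written does not close.
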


Thus the vertex transitive graphs, compromising between the specific families of graphs addressed by~\cite{Ref_Aaronson04lower,Ref_Zhang06new} and the general results of Santha and Szegedy, still provide enough structure to support strong lower bounds.



\section{Definitions and Notation}
As in \cite{Ref_Aaronson04lower,Ref_Zhang06new}, we focus on the local search problem stated precisely as follows: given a graph $G=(V,E)$ and a black-box function $f: V\to \mathbb{R}$, find a local minimum of $f$ on $G$, i.e. find a vertex $v\in V$ such that $f(v) \leq f(w)$ for all neighbors $w$ of $v$. While the graph is known to the algorithm, the values of $f$ may only be \newfix{accessed} through an oracle. For an algorithm $\mathcal{A}$ that solves the local search problem on $G$, let $T(\mathcal{A}, G)$ be the maximum number of queries made to the black-box function by $\mathcal{A}$ before it returns a local minimum, this maximum taken over all functions $f$ on $G$. Given a graph $G$, the randomized query complexity for Local Search on $G$ is defined as $\min_{\mathcal{A}}T(\mathcal{A}, G)$, where the minimum ranges over all randomized algorithms $\mathcal{A}$ \fix{that output a local minimum with probability at least $2/3$}. The quantum query complexity is defined similarly, except that in the quantum case, $T(\mathcal{A}, G)$ is the maximum number of unitary query transformations \fix{of the error-bounded quantum algorithm $\mathcal{A}$}. The randomized (resp. quantum) query complexity for local search on $G$ will be denoted by  $\RLS(G)$ (resp. $\QLS(G)$).

As mentioned in the introduction, we focus on the vertex-transitive graphs, those whose automorphism groups act transitively on their vertex sets. Perhaps the most important subclass of the vertex-transitive graphs are the Cayley graphs. Let $G$ be a group (finite, in this article, and written multiplicatively) and $\Gamma$ a set of generators for $G$. The Cayley graph $C(G,\Gamma)$ is the graph with vertex set $G$ and edges $E = \{ (g, g \gamma) \mid \fix{g\in G},\gamma \in \Gamma \cup \Gamma^{-1}\}$. Note that with this definition for the edges, $(a, b) \in E \Leftrightarrow (b, a) \in E$ even when $G$ is nonabelian, and we may consider the graph to be undirected.

If $X$ is a sequence of vertices in a graph, we write $X_{i\to j}$ to denote the subsequence of $X$ from position $i$ to position $j$ ($i\leq j$).
If $X=(x_1,\ldots,x_t)$ is a sequence of vertices in a Cayley graph and $g$ is a group element, 
then we use $gX$ to denote the sequence $(gx_1,\ldots,gx_t)$. More generally,
for any automorphism $\sigma$ of a vertex-transitive graph $G$ and 
any sequence $X=(x_1,\ldots,x_t)$ of vertices in $G$, we let $\sigma X$ denote the sequence
$(\sigma(x_1),\ldots,\sigma(x_t))$.

The distance between two vertices $u,v$ of a graph $G$ shall be denoted by $\Delta_G(u,v)$;
when $G$ is understood from context we abbreviate to $\Delta(u,v)$.
The statistical distance between two distributions $D_1$ and $D_2$ on the same set $\Omega$ is defined as the distance in total variation:
$$
\|D_1-D_2\|_{\rm t.v.}=\max_{E\subset\Omega}|D_1(E)-D_2(E)| = \frac{1}{2} \sum_{\omega \in \Omega} | D_1(\omega) - D_2(\omega)| \,.
$$ 
We say that the distribution $D_1$ is \emph{$\delta$-close} to distribution $D_2$ if $\|D_1-D_2\|_{\rm t.v.}\leq \delta$.

\section{Generalizing Aaronson's Snakes}

Aaronson's~\cite{Ref_Aaronson04lower} application of the quantum and relational adversary methods to local search problems involved certain families of walks on a graph he called ``snakes.''
We begin by presenting Aaronson's snake method, adjusted to suit our generalization.
Throughout this section, let $G$ be a graph. A \emph{snake} $X$ of length $L$ is a sequence $(x_0,\ldots,x_L)$ of vertices in $G$
such that each $x_{i+1}$ is either equal to $x_i$ or a neighbor of $x_i$.
The subsequence $X_{0\to j}$ shall be referred to as the $j$-length ``head'' of the snake $X$.
Suppose $\mathcal{D}_{x_0,L}$ is a distribution over snakes of length $L$ starting at $x_0$, 
and $X$ is a snake drawn from $\mathcal{D}_{x_0,L}$.
In Aaronson's parlance, the snake $X$ ``flicks'' its tail by choosing a position $j$ uniformly at random from 
the set $\set{0,\ldots, L-1}$, and then drawing a new snake $Y$ from $\mathcal{D}_{x_0,L}$
conditioned on the event that $Y_{0\to j}=X_{0\to j}$, that is, that $Y$ has the same $j$-length head as $X$. In order to simplify the proof \fix{for vertex-transitive graphs} below, we consider a generalization in which a snake flicks its tail according to a  distribution $\mathbf{D}_L$, \fix{which may be nonuniform}, on the set $\set{0,\ldots, L-1}$. We shall relax, also, Aaronson's original condition that, aside from adjacent repetition of a vertex $v$, snakes be non-self-intersecting.

Let $X=(x_0,\ldots,x_L)$ be a snake. 
Define the function $f_X$ on $G$ as follows: for each vertex $v$ of $G$,
$$
f_X(v)=
\begin{cases}
	L-\max\set{i:x_i=v} & \text{if $v\in X$,} \\
       L+\Delta(x_0,v) & \text{if $v\not\in X$\,.}
\end{cases}
$$
In other words, $f_X(x_L)=0$, and for any $i<L$, $f_X(x_i)=L-i$ if $x_i\not\in\set{x_{i+1},\ldots,x_L}$.
Clearly $f_X$ has a unique local minimum at $x_L$.

Let $X$ and $Y$ be snakes of length $L$ starting at $x_0$. 
A vertex $v$ is called a \emph{disagreement} between $X$ and $Y$ if $v\in X\cap Y$ and $f_X(v)\neq f_Y(v)$.
We say $X$ and $Y$ are \emph{consistent} if there is no disagreement between $X$ and $Y$. 
Observe that so long as $X$ and $Y$ are consistent, $f_X(v)\neq f_Y(v)\iff set_X(v)\neq set_Y(v)$ for all vertices $v$,
where $set_X$ is the function on $G$ defined as $set_X(v)=1$ if $v\in X$ and $0$ otherwise. 

Fix a distribution $\mathcal{D}_{x_0,L}$ for snakes of length $L$ starting at $x_0$
and a distribution $\mathbf{D}_L$ on the set $\set{0,\ldots,L-1}$. 
With these in place, we let $\pr[j,X]{\cdot}$ denote the probability of an event over the distribution determined by
independently selecting $j$ according to  $\mathbf{D}_L$ and $X$ from $\mathcal{D}_{x_0,L}$.

We record Aaronson's definition of \emph{good} snakes, replacing the uniform distribution on the set $\set{0,\ldots,L-1}$ with the distribution $\mathbf{D}_L$,
\fix{and requiring a good snake's endpoint to be different from those of most other snakes}. 
\begin{definition}
A snake $X\in \mathcal{D}_{x_0,L}$ is \emph{$\epsilon$-good} w.r.t. distribution $\mathbf{D}_L$
if it satisfies the following:
\begin{enumerate}
\item $X$ is \emph{$0.9$-consistent}: 
$\pr[j,Y]{\mbox{$X$ and $Y$ are consistent, \fix{and $x_L\neq y_L$}}\mid Y_{0\to j}=X_{0\to j}} \geq 0.9\,.
$
\item $X$ is \emph{$\epsilon$-hitting}:
For all $v\in G$,
$
\pr[j,Y]{v\in Y_{j+1\to L} \mid Y_{0\to j}=X_{0\to j}} \leq \epsilon\,.
$
\end{enumerate}
\end{definition}

Our lower bounds will depend on the following adaptation of Aaronson's theorem of~\cite{Ref_Aaronson04lower}:

\begin{theorem}\label{Theorem_Aaronson_Snakes} 
Assume a snake $X$ drawn from $\mathcal{D}_{x_0,L}$ is $\epsilon$-good w.r.t. $\mathbf{D}_L$ with probability at least $0.9$.
Then 
\[
\mbox{$\RLS(G)=\Omega(1/\epsilon)$ and $\QLS(G)=\Omega(\sqrt{1/\epsilon})$.}
\]
\end{theorem}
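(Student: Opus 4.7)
The plan is to establish Theorem~\ref{Theorem_Aaronson_Snakes} by feeding a weighted relation between pairs of $\epsilon$-good snakes into two known adversary-style lower bounds from \cite{Ref_Aaronson04lower}: the relational adversary method for $\RLS(G)$ and the weighted quantum adversary method for $\QLS(G)$. The crucial role of the generalized notion of goodness is that a correct local-search algorithm on input $f_X$ must output $x_L$ (the unique local minimum, with $f_X(x_L)=0$); the new requirement $x_L\neq y_L$ inside $0.9$-consistency then guarantees that a correct algorithm must genuinely distinguish $f_X$ from $f_Y$ for most flicked pairs $(X,Y)$.

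First I would introduce the weight function
\[
w(X,Y) \;=\; \pr[j,Z]{Z=Y \text{ and } Z_{0\to j}=X_{0\to j}},
\]
restricted to pairs of $\epsilon$-good snakes with $x_L\neq y_L$, and then symmetrize it. Using the $0.9$-consistency hypothesis together with the assumption that $X$ is good with probability at least $0.9$, I would show that for each good $X$ the mass $\theta(X)=\sum_Y w(X,Y)$ is $\Omega(1)$. For each good $X$ and each vertex $v$, I would bound the "disagreement mass" $\theta(X,v)=\sum_{Y:f_X(v)\neq f_Y(v)}w(X,Y)$ by $O(\epsilon)\cdot\theta(X)$: since consistent $X,Y$ can disagree at $v$ only when $set_X(v)\neq set_Y(v)$, and since $Y_{0\to j}=X_{0\to j}$ is fixed, $v$ must lie in $X_{j+1\to L}$ or $Y_{j+1\to L}$. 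The $\epsilon$-hitting property directly bounds the $Y$-side contribution by $\epsilon$, and after symmetrization the $X$-side is bounded identically.

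Plugging these two estimates into the relational adversary theorem yields
\[
\RLS(G) \;\geq\; \min_{(X,Y),\,v}\;\max\!\left(\frac{\theta(X)}{\theta(X,v)},\,\frac{\theta(Y)}{\theta(Y,v)}\right) \;=\; \Omega\!\left(\tfrac{1}{\epsilon}\right),
\]
while the weighted quantum adversary theorem yields
\[
\QLS(G) \;\geq\; \min_{(X,Y),\,v}\;\sqrt{\frac{\theta(X)\,\theta(Y)}{\theta(X,v)\,\theta(Y,v)}} \;=\; \Omega\!\left(\tfrac{1}{\sqrt{\epsilon}}\right).
\]
The hypothesis that $X$ is $\epsilon$-good only with probability $0.9$ (rather than always) can be absorbed by conditioning $\mathcal{D}_{x_0,L}$ on goodness; this perturbs the consistency and hitting estimates by small additive constants and leaves the asymptotics unchanged.

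The main obstacle I expect is the bookkeeping around symmetrization: the natural weight $w(X,Y)$ is asymmetric (different flick positions $j$ can witness the same pair), and the three requirements — consistency, distinct endpoints, and $\epsilon$-hitting at $v$ — must be preserved simultaneously when symmetrizing. A secondary technical subtlety is verifying that whenever $X$ and $Y$ are flicked-tail partners but \emph{inconsistent}, the total mass of such $Y$ is small enough (by the $0.9$-consistency bound) to be safely discarded from the relation without destroying $\theta(X)=\Omega(1)$. Once $\theta(X)=\Omega(1)$ and $\theta(X,v)=O(\epsilon)$ are established uniformly over good $X$ and all $v$, the invocation of the two adversary theorems is routine.
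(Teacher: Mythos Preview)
Your plan tracks the paper's approach, but two steps do not go through as written.

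First, you cannot get $\theta(X)=\Omega(1)$ for every good $X$ by ``conditioning $\mathcal{D}_{x_0,L}$ on goodness.'' The flicked $Y$ is drawn conditioned on $Y_{0\to j}=X_{0\to j}$, not from $\mathcal{D}_{x_0,L}$ itself, so the global hypothesis $\Pr[Y\text{ good}]\ge 0.9$ gives no pointwise control: for a fixed good $X$ the conditional probability that $Y$ is good can be tiny, and then $\theta(X)$ (restricted to good $Y$) collapses. The paper handles this with a separate combinatorial extraction (Aaronson's Lemma~8 in \cite{Ref_Aaronson04lower}): after proving the symmetry $w(X,Y)=p(X)\Pr_{j,Z}[Z=Y\mid Z_{0\to j}=X_{0\to j}]=w(Y,X)$ and the global estimate $\sum_{X,Y\in\mathcal D^*}R(A_X,B_Y)\ge 0.6$, that lemma produces a nonempty $\widetilde{\mathcal D}\subset\mathcal D^*$ with $M(A_X)\ge 0.3\,p(X)$ for \emph{every} $X\in\widetilde{\mathcal D}$ simultaneously. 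This fixed-point step is precisely what your hand-wave is standing in for, and simple conditioning does not replace it.

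Second, the claim $\theta(X,v)\le O(\epsilon)\,\theta(X)$ for all good $X$ and all $v$ is false. Take $v=x_L\in X$: for almost every consistent $Y$ with $y_L\neq x_L$ one has $v\notin Y$, so $set_X(v)\neq set_Y(v)$ and $\theta(X,x_L)$ is close to $\theta(X)$, not $O(\epsilon)\theta(X)$. Your phrase ``after symmetrization the $X$-side is bounded identically'' confuses a bound on $\theta(Y,v)$ with one on $\theta(X,v)$. What the paper actually proves is that for each triple $(X,Y,v)$ with $R(A_X,B_Y)>0$ and $set_X(v)\neq set_Y(v)$, \emph{at least one} of $v\notin X$ or $v\notin Y$ holds; in the first case $M(A_X,v)\le\epsilon\,p(X)$ by the $\epsilon$-hitting of $X$, symmetrically in the second. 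This gives the $\max$ bound at once, and the geometric-mean bound by pairing the $\epsilon$ side with the trivial $M(\cdot)/M(\cdot,v)\ge 1$ on the other side. (A smaller point: with your joint-probability weight $w(X,Y)=\Pr_{j,Z}[Z=Y\text{ and }Z_{0\to j}=X_{0\to j}]$ one has $\sum_Y w(X,Y)=\Pr_{j,Z}[Z_{0\to j}=X_{0\to j}]$, which is far from $\Omega(1)$; the paper's weight uses the conditional form times $p(X)$, and in any case the adversary quantities are ratios.)
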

\begin{proof}
To begin, we reduce the local search problem to a decision problem. 
For each snake $X\in \mathcal{D}_{x_0,L}$ and a bit $b \in \{0,1\}$,
define the function $g_{X,b}$ on $G$ as follows: $g_{X,b}(v)=(f_X(v),-1)$ for all vertices $v\neq x_L$, and $g_{X,b}(x_L)=(0,b)$.
Then, an input of the decision problem for local search on $G$ is an ordered \fix{pair $(X,g_{X,b})$,} 
where $X\in \mathcal{D}_{x_0,L}$ and $b \in \{0,1\}$ is an answer bit.
However, the ``snake part'' $X$ in the input cannot be  queried---it appears in the input as a bookkeeping tool. 
Given such an input $(X,g_{X,b})$, the decision problem is to output the answer bit $b$. 
Observe that the randomized (resp. quantum) query complexity of the decision problem is a lower bound for that of the original local search problem. 
This incorporation of $X$ into the input of the decision problem induces a natural one-to-one correspondence 
between an input set of the same answer bit and the set of snakes appearing in the input set. (Thanks to Scott Aaronson for suggesting this convention to us!)
In Aaronson's original version, since the input part $X$ is omitted, the snakes must be non-self-intersecting in order to obtain such a one-to-one correspondence. \newfix{Santha and Szegedy \cite{Ref_Santha04quantum} have presented an alternate approach for eliminating self-intersecting snakes while following Aaronson's proof scheme, though their technique 
only applies to the quantum case.} 

The remaining part of the proof, which establishes lower bounds for the decision problem \newfix{using the rational and quantum adversary methods}, is similar to Aaronson's proof with the exception of some technical details due to the adjustments in the definition of good snakes. We have relegated the full proof to the appendix.

\end{proof}

\section{Lower Bounds for Vertex-Transitive Graphs}
For simplicity, we first apply the snake framework for Cayley graphs,
and then extend the approach for vertex-transitive graphs.

\subsection{Lower bounds for Cayley graphs}
Consider a Cayley graph $C(G,\Gamma)$ of group $G$ determined by a generating set 
$\Gamma$. Our goal is to design a good snake distribution for $C(G,\Gamma)$. 
Our snakes will consist of a series of ``chunks'' so that the endpoint of each chunk
looks almost random given the preceding chunks. 
The locations at which a snake flicks its tail will be chosen
randomly from the locations of the chunks' endpoints. Each chunk is an ``extended'' shortest path
connecting its endpoint with the endpoint of the previous one. The relevant properties of these snakes depends on the length of each chunk as well as the number of chunks in each snake.
To determine these parameters, we begin with the following definitions.

Let $B(s)$ be the ball of radius $s$ centered at the group identity, i.e.,
$B(s)$ is the set of vertices $v$ for which $\Delta(1,v)\leq s$. 
We say that Cayley graph $C(G,\Gamma)$ is \emph{$s$-mixing} if there is a distribution
over the ball $B(s)$ that is $O\left({s}/{|G|^{3/2}}\right)$-close to the uniform distribution over $G$.
Clearly, every Cayley graph of diameter $d$ is $d$-mixing.

Now we assume $C(G,\Gamma)$ is \emph{$s$-mixing}, and let $D_s$ be a distribution
over $B(s)$ so that the extension of $D_s$ to be over $G$ is $\delta$-close to the uniform distribution over $G$,
where $s\leq \sqrt{|G|}$ and $\delta={0.1 s}/{|G|^{3/2}}$.
For each group element $g\in B(s)$, we fix a shortest path $(1,g_1,\ldots,g_r)$ in $C(G,\Gamma)$ from the group identity to $g$ (here $r=\Delta(1,g)$).
Then let $S(g)$ denote the sequence $(g_1,g_2,\ldots,g_s)$,  where $g_i=g$ for $i\geq r$.

Fix $\ell = {\sqrt{|G|}}/{(200s)}$ and let $L=(\ell+1)s$. 
We formally define our snake distribution $\mathcal{D}_{x_0,L}$ for snakes $X=(x_0,\ldots,x_L)$
as follows. For any $k\in\set{0,\ldots,\ell}$, 
choose $g_k$ independently according to the distribution $D_s$, and let the $k$th ``chunk''
$(x_{sk+1},\ldots,x_{sk+s})$ be identical to the sequence $x_{sk}S(g_k)$.

\begin{proposition}
A snake $X$ drawn from $\mathcal{D}_{x_0,L}$  $\delta$-mixes by $s$ steps in the sense that
for any $k$ and any $t\geq s$, $x_{sk+t}$ is $\delta$-close to uniform over $G$ given $x_{sk}$.
\end{proposition}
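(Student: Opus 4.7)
The plan is to exploit the fact that completing even a single chunk of the snake multiplies the current position on the left by an element drawn from $D_s$, which is $\delta$-close to the uniform distribution $U$ on $G$. Since $t \geq s$, at least one full chunk (namely chunk $k$) is completed within the first $t$ steps after position $sk$, and this single chunk will provide enough randomness to make $x_{sk+t}$ nearly uniform, regardless of what the remaining partial chunk does.

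First I would write $t = us + r$ with $u \geq 1$ and $0 \leq r < s$, so that position $sk + t = s(k+u) + r$ is either the endpoint of chunk $k+u-1$ (when $r = 0$) or the $r$-th vertex inside chunk $k+u$. Unfolding the construction of $\mathcal{D}_{x_0,L}$,
\[
x_{sk+t} = x_{sk}\cdot g_k \cdot g_{k+1}\cdots g_{k+u-1}\cdot h,
\]
where $g_k,\ldots,g_{k+u-1}$ are independent draws from $D_s$, and $h$ is either the group identity (when $r=0$) or the $r$-th entry of $S(g_{k+u})$, which is a deterministic function of an independent draw $g_{k+u}\sim D_s$.

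The key step is to isolate the first factor $g_k$ from everything else. Let $M = g_{k+1}\cdots g_{k+u-1}\cdot h$, a random group element independent of $g_k$, with some distribution $\mu$. Conditioned on $x_{sk}$, the distribution of $x_{sk+t}$ is the left translate by $x_{sk}$ of the convolution $D_s * \mu$. Left translation is an isometry for total variation distance; moreover, convolution with any probability distribution is a contraction on total variation (by Fubini and the triangle inequality applied to $\|\cdot\|_{L^1}$), and the uniform distribution satisfies $U * \mu = U$. Hence
\[
\|D_s * \mu - U\|_{\rm t.v.} = \|D_s * \mu - U * \mu\|_{\rm t.v.} \leq \|D_s - U\|_{\rm t.v.} \leq \delta,
\]
which gives the proposition.

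There is no substantial obstacle in this argument. The only mildly subtle point is the case $r > 0$, where the trailing factor $h$ is a partial step along a shortest path rather than a fresh $D_s$ draw; however, since we absorb $h$ into $M$ and depend only on the randomness of $g_k$, this case is handled uniformly with the case $r = 0$.
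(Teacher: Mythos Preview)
Your argument is correct. The paper states this proposition without proof, treating it as an immediate consequence of the snake construction; your write-up makes the implicit reasoning explicit, namely that one completed chunk multiplies by a $D_s$-sample and that convolving with any probability measure contracts total variation, so $\|D_s*\mu - U\|_{\rm t.v.}\le\|D_s-U\|_{\rm t.v.}\le\delta$.
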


We define distribution $\mathbf{D}_L$ on $\set{0,\ldots,L-1}$  
as the uniform distribution on the set $\set{s, 2s,\ldots, \ell s}$.
So, unlike Aaronson's snakes whose tails \fix{may} be flicked at any location,
our snakes can not ``break'' in the middle of any chunk and only flick their tails at the chunk endpoints.

To show that most of our snakes are good, we start by showing that most snakes $X$ and $Y$ are consistent \fix{and have different endpoints}.

\begin{proposition}
Let $j$ be chosen according to $\mathbf{D}_L$. 
Let $X, Y$ be drawn from $\mathcal{D}_{x_0,L}$ 
conditioned on $Y_{0\to j} = X_{0\to j}$. 
Then
\[
\pr[X,j,Y]{\mbox{$X$ and $Y$ are consistent, and $x_L\neq y_L$}~|~Y_{0\to j} = X_{0\to j}} 
\geq 0.9999-\frac{2}{|G|}\,.
\]
\end{proposition}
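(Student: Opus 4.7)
The plan is to reduce the problem to a union bound over four easier-to-bound ``bad'' events. First I would verify that the desired ``consistent and $x_L\neq y_L$'' property is implied by the conjunction of (C1) $X_{j+1\to L}\cap X_{0\to j}=\emptyset$, (C2) $Y_{j+1\to L}\cap X_{0\to j}=\emptyset$, (C3) $X_{j+1\to L}\cap Y_{j+1\to L}=\emptyset$, and (C4) $x_L\neq y_L$: under (C1)--(C3) we have $X\cap Y=X_{0\to j}$, and on this common set both $f_X$ and $f_Y$ equal $L-\max\{i\leq j:x_i=v\}$, so no disagreement is possible. It therefore suffices to bound each $\Pr[\neg C_i]$ individually.

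The bound $\Pr[\neg C_4]\leq 2/|G|$ is the source of the $-2/|G|$ correction. Conditioned on the shared endpoint $x_{ks}$, both $x_L$ and $y_L$ are left-translates of independent products of i.i.d.\ $D_s$-samples (one factor per remaining chunk). Since $U*D=U$ for any distribution $D$ on a group, convolution cannot increase the total-variation distance from uniform, so each of these products is $\delta$-close to uniform. Independence then gives $\Pr[x_L=y_L]\leq \max_v\Pr[\xi^Y=v]\leq 1/|G|+2\delta\leq 2/|G|$, using $\delta=0.1 s/|G|^{3/2}$ and $s\leq\sqrt{|G|}$.

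For (C1)--(C3), I would apply Markov to the expected intersection size and then decompose into pairwise collision events. The central pointwise estimate is that whenever the two chunks containing $t$ and $t'$ are separated by at least one further chunk, conditioning on all randomness except one intervening chunk-increment $g_m$ drawn from $D_s$ reduces the event $x_t=x_{t'}$ to a single value of $g_m$, yielding probability at most $1/|G|+2\delta$. Summing over the $O(L^2)$ such pairs, using AM-GM in the form $(L-j)(j+1)\leq (L+1)^2/4$, and noting that $L=(\ell+1)s\leq \sqrt{|G|}/100$ in the meaningful regime $\ell\geq 1$, this contributes $O(L^2/|G|)\leq 1/40000$ per event, so all three events together stay within the target $10^{-4}$.

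The hard part will be controlling collisions between adjacent chunks (chunk $k-1$ of the shared head paired with chunk $k$ of a tail in (C1) and (C2), and chunk $k$ of $X'$ paired with chunk $k$ of $Y'$ in (C3)), where the mixing argument fails because no fresh intervening increment is available. Here I would exploit the shortest-path structure of $S(\cdot)$: any such collision forces an equation of the form $g_{k-1}^{-1}\phi_{r'}(g_{k-1})=\phi_r(g_k)$ (where $\phi_r(g)$ denotes the $r$-th entry of $S(g)$), and matching the distances of the two sides from the identity constrains $r+r'=\Delta(1,g_{k-1})$. This cuts the set of collision-compatible $(r,r')$ pairs down to $O(s)$ per adjacent chunk pair, and combining this with the near-uniform density of $D_s$ on the relevant preimages $\phi_r^{-1}(\cdot)$ keeps the adjacent-chunk contribution absorbed into the $10^{-4}$ budget. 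Assembling the four bounds then yields $0.9999-2/|G|$.
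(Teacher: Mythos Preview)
Your reduction via (C1)--(C3) is logically sound but demands far more than the proposition needs, and the surplus cannot be paid for. Condition (C3) asks that the two fresh tails $X_{j+1\to L}$ and $Y_{j+1\to L}$ be \emph{disjoint as vertex sets}. But both tails start from the same vertex $x_j$, and their very first entries are $x_{j+1}=x_j\cdot\phi_1(g_m^X)$ and $y_{j+1}=x_j\cdot\phi_1(g_m^Y)$, where $\phi_1(\cdot)$ takes values only among the generators in $\Gamma\cup\Gamma^{-1}$. By Cauchy--Schwarz,
\[
\Pr\bigl[x_{j+1}=y_{j+1}\bigr]=\sum_{\gamma}\Pr\bigl[\phi_1(g)=\gamma\bigr]^2\;\ge\;\frac{1}{|\Gamma\cup\Gamma^{-1}|}\,,
\]
so (C3) alone fails with probability at least $1/\mathrm{degree}$, which is $\Omega(1)$ for bounded-degree Cayley graphs---nowhere near the $10^{-4}$ budget. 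Your adjacent-chunk patch does not rescue this: even after the distance-matching cuts the compatible $(r,r')$ pairs to $O(s)$, the individual term at $r=1$ already has $\Pr_{g_m}[\phi_1(g_m)=w]=\Omega(1/|\Gamma|)$, so the sum cannot be made $o(1)$ in general. The same obstruction applies to (C1) and (C2): the first step of a tail chunk backtracks the last step of the preceding head chunk with probability $\Omega(1/|\Gamma|)$.

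The paper sidesteps this entirely by working directly with \emph{disagreements} rather than intersections. The point is that an intersection in the first fresh chunk is harmless: if $v$ lies in both first chunks at positions $t,t'\in(j,j+s)$ with $v\neq x_{j+s},y_{j+s}$, then because each chunk is a shortest path from $x_j$, necessarily $t-j=\Delta(x_j,v)=t'-j$, hence $t=t'$ and $f_X(v)=f_Y(v)$. Thus any genuine disagreement forces one of the last-occurrence indices to reach at least $j+s$, where the $\delta$-mixing kicks in and a union bound over the $O((L-s)^2)$ relevant index pairs gives the $2(L-s)^2/|G|\le 10^{-4}$ estimate. You should abandon the disjointness formulation and argue about disagreements directly; the shortest-path observation you were reaching for is exactly what makes the ``same first chunk'' case vacuous rather than something to be bounded.
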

\begin{proof}
Fix $j\in\set{s,2s,\ldots,\ell s}$. 
Suppose $v$ is a disagreement between $X$ and $Y$, letting $t=\max\set{i:x_i=v}$ and $t'=\max\set{i:y_i=v}$,
then $t\neq t'$ and $t',t\geq j$. We can't have both $t<j+s$ and $t'<j+s$, because
otherwise we would have $v\neq x_{j+s}$ and $v\neq y_{j+s}$ which implies
that both $t-j$ and $t'-j$ equal the distance from $x_{j}$ to $v$. 
 
If there is a disagreement, there must exist $t$ and $t'$ such that
$x_t=y_{t'}$ and either $t\geq j+s$ or $t'\geq j+s$.
In the case $t\geq j+s$, we have $x_t$ is $\delta$-close to uniform given $y_{t'}$, which implies
\[
\pr[X,Y_{j\to L}]{x_t = y_{t'}}
\leq \delta+\frac{1}{|G|} \leq\frac{2}{|G|}\,.
\]
Similarly, in the case $t'\geq j+s$, we also have
$\pr[X,Y_{j\to L}]{x_t = y_{t'}} \leq \frac{2}{|G|}$.
Summing up for all possible pairs of $t$ and $t'$ yields
\[
\pr[X,Y_{j\to L}]{\mbox{there is a disagreement between $X$ and $Y$}} \leq \frac{2(L-s)^2}{|G|}\leq 0.0001\,. 
\]
Averaging over $j$ produces
\[
\pr[X,j,Y]{\mbox{$X$ and $Y$ are not consistent}~|~Y_{0\to j}=X_{0\to j}} \leq 0.0001\,.
\]
\fix{
To complete the proof, observe that 
\[
\pr[X,j,Y]{x_L= y_L~|~Y_{0\to j}=X_{0\to j}}\leq \delta+\frac{1}{|G|}\leq \frac{2}{|G|}\,.
\]
since $y_L$ is $\delta$-close to uniform given $x_L$.}
\end{proof}

By Markov's inequality, we obtain:
\begin{corollary}Let $X$ be drawn from $\mathcal{D}_{x_0,L}$. Then
\[
\pr[X]{\mbox{$X$ is $0.9$-consistent}} \geq 1-\frac{0.0001+2/|G|}{0.1} = 0.999-\frac{20}{|G|}\,.
\]
\end{corollary}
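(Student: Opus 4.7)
The plan is to apply Markov's inequality directly to the bound established in the preceding proposition. First I would introduce, for each snake $X\in \mathcal{D}_{x_0,L}$, the conditional failure probability
\[
p(X) = \pr[j,Y]{\text{$X$ and $Y$ are inconsistent, or $x_L = y_L$} \mid Y_{0\to j} = X_{0\to j}},
\]
where $j$ is drawn from $\mathbf{D}_L$ and $Y$ from $\mathcal{D}_{x_0,L}$. By the definition of good snakes, $X$ is $0.9$-consistent precisely when $p(X)\leq 0.1$, so the goal reduces to lower-bounding $\pr[X]{p(X)\leq 0.1}$.

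Next I would observe that the joint probability appearing on the left-hand side of the preceding proposition factors, by the tower property of conditional expectation, as $\expect[X]{1-p(X)}$. The bound $0.9999 - 2/|G|$ therefore rearranges to
\[
\expect[X]{p(X)} \leq 0.0001 + \frac{2}{|G|}.
\]
Since $p(X)$ is a nonnegative random variable, Markov's inequality gives
\[
\pr[X]{p(X)>0.1} \leq \frac{\expect[X]{p(X)}}{0.1} \leq \frac{0.0001+2/|G|}{0.1},
\]
and passing to the complement yields the claimed bound $0.999-20/|G|$. There is no real obstacle here: the quantitative work was done in the preceding proposition, and this corollary is simply the standard conversion of an expected failure probability into a concentration statement via Markov.
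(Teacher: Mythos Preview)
Your argument is correct and matches the paper's own proof, which consists solely of the phrase ``By Markov's inequality.'' You have simply spelled out the routine details of that application.
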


We now turn our attention to bounding the hitting probability when a snake flicks its tail.
Following Aaronson, we introduce a notion of \emph{$\epsilon$-sparseness} for snakes and show that
\emph{(i)} if a snake is $\epsilon$-sparse  then it is $O(\epsilon)$-hitting, and that
\emph{(ii)} most snakes are $\epsilon$-sparse. 

Formally, we define:

\begin{definition}
For each $x\in G$, let $P(x) = \pr[g\in D_s]{x\in S(g)}$.
A snake $X$ drawn from $\mathcal{D}_{x_0,L}$ is called \emph{$\epsilon$-sparse} if
for all vertex $v\in G$,
\[
\sum_{k=1}^{\ell} P(x_{sk}^{-1}v) \leq \epsilon\ell\,.
\] 
\end{definition}
 
Intuitively, the sparseness of a snake means that if the snake flicks a random chunk, 
it is unlikely to hit any fixed vertex.

\begin{proposition}
For $\epsilon\geq \frac{2(L-s)}{|G|}$, if snake $X$ is $\epsilon$-sparse then $X$ is $2\epsilon$-hitting.
\end{proposition}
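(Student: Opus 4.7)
The plan is to bound $\Pr[v \in Y_{j+1\to L}\mid Y_{0\to j}=X_{0\to j}]$ by decomposing $Y_{j+1\to L}$ into chunks of length $s$ and handling the first post-flick chunk via sparseness and all later chunks via the earlier mixing proposition.

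First I would condition on the value of the flick location. Since $\mathbf{D}_L$ is uniform on $\set{s,2s,\ldots,\ell s}$, write $j = sk_0$ with $k_0$ uniform in $\set{1,\ldots,\ell}$. The conditioning $Y_{0\to sk_0}=X_{0\to sk_0}$ forces $y_{sk_0}=x_{sk_0}$, and the remaining portion $Y_{sk_0+1\to L}$ is determined by independent fresh draws $g'_{k_0},\ldots,g'_\ell\sim D_s$ forming the chunks $k_0,k_0+1,\ldots,\ell$ of $Y$. A union bound gives
\[
\pr[Y]{v\in Y_{sk_0+1\to L}\mid Y_{0\to sk_0}=X_{0\to sk_0}} \;\leq\; \sum_{k=k_0}^{\ell}\pr{v\in y_{sk}S(g'_k)}.
\]

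Second I would analyze the $k=k_0$ summand exactly: because $y_{sk_0}=x_{sk_0}$ and $g'_{k_0}\sim D_s$ is independent, this probability equals $P(x_{sk_0}^{-1}v)$, so averaging over $k_0$ and invoking the $\epsilon$-sparseness hypothesis yields
\[
\frac{1}{\ell}\sum_{k_0=1}^{\ell}P(x_{sk_0}^{-1}v)\;\leq\;\epsilon.
\]
For each $k>k_0$, marginalizing over the endpoint $y_{sk}$ gives $\pr{v\in y_{sk}S(g'_k)}=\expect[y_{sk}]{P(y_{sk}^{-1}v)}$. By the mixing proposition, $y_{sk}$ is $\delta$-close to uniform on $G$ given $y_{sk_0}=x_{sk_0}$. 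Since $P$ takes values in $[0,1]$ and $\sum_{u\in G}P(u)=\expect[g\sim D_s]{|S(g)|}=s$, the uniform-average is $s/|G|$, and the standard statistical-distance comparison gives
\[
\expect[y_{sk}]{P(y_{sk}^{-1}v)}\;\leq\;\frac{s}{|G|}+\delta.
\]

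Finally I would combine everything. Summing the later chunks contributes at most $\ell(s/|G|+\delta)$, so averaging over $k_0$ produces
\[
\pr[j,Y]{v\in Y_{j+1\to L}\mid Y_{0\to j}=X_{0\to j}}\;\leq\;\epsilon+\frac{\ell s}{|G|}+\ell\delta.
\]
Using $L-s=\ell s$ and the standing hypothesis $\epsilon\geq 2(L-s)/|G|$ yields $\ell s/|G|\leq \epsilon/2$, while $\ell\delta = \tfrac{\sqrt{|G|}}{200s}\cdot\tfrac{0.1s}{|G|^{3/2}}=O(1/|G|)$ is easily absorbed into $\epsilon$ under the same hypothesis (which forces $\epsilon=\Omega(1/\sqrt{|G|})$). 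This yields the claimed $2\epsilon$-hitting bound.

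The only real obstacle is bookkeeping: one must be careful that the conditioning $Y_{0\to j}=X_{0\to j}$ aligns exactly with a chunk boundary (which is precisely why $\mathbf{D}_L$ is supported on multiples of $s$), so that the chunks $k_0,k_0+1,\ldots,\ell$ of $Y$ are genuinely drawn i.i.d.\ from $D_s$ and the mixing proposition applies with $y_{sk_0}$ as the starting vertex. Once that bookkeeping is in place, sparseness handles the one chunk that could concentrate on $v$ and mixing kills the rest.
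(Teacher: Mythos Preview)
Your proposal is correct and follows essentially the same approach as the paper: split off the first post-flick chunk and control it via sparseness, then bound the remaining tail via mixing. The only cosmetic difference is that the paper bounds the tail $Y_{j+s\to L}$ vertex-by-vertex (using $\Pr[y_t=v]\le \delta+1/|G|$ for each $t\ge j+s$ to get $(L-s)(\delta+1/|G|)$), whereas you bound it chunk-by-chunk via $\expect{P(y_{sk}^{-1}v)}$; both routes yield the same $2\epsilon$ conclusion. One small slip: $\sum_{u}P(u)=\expect[g]{|\{u:u\in S(g)\}|}\le s$ rather than $=s$ (the sequence $S(g)$ repeats its endpoint), but only the inequality is needed.
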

\begin{proof}
Fix a snake $X$, and fix $j\in \set{s,2s\ldots, \ell s}$. 
Let $Y$ be drawn from $\mathcal{D}_{x_0,L}$ conditioned on the event that 
$Y_{0\to j}=X_{0\to j}$. 
Since $y_{t}$ is $\delta$-close to uniform for all $t\geq j+s$, 
\[
\pr[Y]{v\in Y_{j+s\to L}~|~ Y_{0\to j}=X_{0\to j}} \leq (L-s)(\delta+\frac{1}{|G|}) \leq \frac{2(L-s)}{|G|}\,.
\]
On the other hand, 
\[
\pr[Y]{v\in Y_{j+1\to j+s}~|~ Y_{0\to j}=X_{0\to j}}=\pr[g\in D_s]{v\in x_{j}S(g)}= P(x_{j}^{-1}v)\,.
\]
Hence,
\[
\pr[j,Y]{v\in Y_{j+1\to L}~|~ Y_{0\to j}=X_{0\to j}} \leq \frac{1}{\ell}\sum_{k=1}^{\ell}P(x_{sk}^{-1}v)+\frac{2(L-s)}{|G|} \leq 2\epsilon\,.
\]
\end{proof}

It remains to show that a snake drawn from $\mathcal{D}_{x_0,L}$
is $\epsilon$-sparse with high probability. 
Firstly, we consider for the ``ideal'' case in which the endpoints of the chunks in a snake are independently uniform.

\begin{lemma}\label{Lemma_Sparse_Uniform}
Let $u_1,\ldots, u_{\ell}$ be independently and uniformly random vertices in $G$.
If $\frac{s}{|G|} \leq \epsilon^2/6$ then
\[
\pr[u_1,\ldots, u_{\ell}]{\sum_{i=1}^{\ell}P(u_i)>2\ell\epsilon} \leq 2^{-\ell\epsilon}\,.
\]
\end{lemma}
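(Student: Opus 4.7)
The plan is to apply a multiplicative Chernoff bound to the sum $\sum_{i=1}^{\ell} P(u_i)$, which is a sum of independent $[0,1]$-valued random variables (since $P(u)=\pr[g\sim D_s]{u\in S(g)}$ is a probability), after first bounding the common expectation $\expect[u]{P(u)}$.

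For the expectation bound, I would use that each sequence $S(g)$ has length $s$ and therefore contains at most $s$ distinct vertices. Exchanging the order of summation,
\[
\sum_{u\in G} P(u) \;=\; \expect[g\sim D_s]{|\set{u:u\in S(g)}|} \;\leq\; s,
\]
so $\expect[u]{P(u)}\leq s/|G|$. Writing $\mu:=\expect{\sum_i P(u_i)} = \ell\,\expect[u]{P(u)}$, the hypothesis $s/|G|\leq \epsilon^2/6$ gives $\mu \leq \ell\epsilon^2/6$.

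Next I would invoke the standard multiplicative Chernoff bound for a sum $X$ of independent $[0,1]$-valued variables with mean $\mu$:
\[
\pr{X \geq t} \;\leq\; \left(\frac{e\mu}{t}\right)^{t}\qquad\text{for } t\geq e\mu.
\]
Setting $t=2\ell\epsilon$, the ratio $t/\mu\geq 12/\epsilon$ exceeds $e$ whenever $\epsilon\leq 12/e$, in which case $e\mu/t\leq e\epsilon/12\leq 1/2$, so
\[
\pr{\textstyle\sum_i P(u_i)\geq 2\ell\epsilon} \;\leq\; 2^{-2\ell\epsilon} \;\leq\; 2^{-\ell\epsilon}.
\]
In the complementary range $\epsilon>12/e$, the inequality $2\ell\epsilon>\ell\geq \sum_i P(u_i)$ holds deterministically (as $P(u_i)\leq 1$), so the probability is trivially zero.

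The only step that requires thought is the expectation calculation, which exploits the $|S(g)|\leq s$ length bound; the concentration step is a textbook application of Chernoff, and the hypothesis $s/|G|\leq \epsilon^2/6$ appears calibrated precisely so that the resulting tail exponent matches the target $\ell\epsilon$. I do not anticipate any serious obstacle.
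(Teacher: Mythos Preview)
Your argument is correct in substance, with one trivial arithmetic slip: from $e\mu/t\le e\epsilon/12$ you need $\epsilon\le 6/e$ (not $12/e$) to conclude $e\mu/t\le 1/2$. This is harmless, since for $\epsilon>1/2$ the event $\sum_i P(u_i)>2\ell\epsilon>\ell$ is already impossible, and for $\epsilon\le 1/2$ one has $e\epsilon/12\le e/24<1/2$; so the case split simply needs to be moved to $\epsilon=1/2$.

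Your route differs from the paper's. The paper does \emph{not} apply Chernoff to $\sum_i P(u_i)$ directly; instead it thresholds, letting $Z=|\{i:P(u_i)\ge\epsilon\}|$, bounds $\expect{Z}\le \ell s/(|G|\epsilon)=:\mu$ via Markov, applies a Chernoff bound to the Bernoulli sum $Z$ to get $\pr{Z\ge\ell\epsilon}\le 2^{-\ell\epsilon}$, and then uses the deterministic estimate $\sum_i P(u_i)\le(\ell-Z)\epsilon+Z\le\ell\epsilon+Z$. Your approach is more direct---a single Chernoff step on the $[0,1]$-valued summands themselves---and in fact yields the slightly stronger exponent $2\ell\epsilon$. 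The paper's two-step argument has the mild advantage that it only needs Chernoff for genuine Bernoulli variables, whereas yours invokes the (equally standard) extension to bounded variables; otherwise the two are equivalent in spirit, and both compute $\expect[u]{P(u)}\le s/|G|$ the same way.
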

\begin{proof}
We will use a Chernoff bound to show that there are very few $u_i$'s for which $P(u_i)$ is large.
To do this, we first need an upper bound on the expectation of $P(u_i)$.
Let $u$ be a uniformly random vertex in $G$. 
For any given $g\in G$, we have $\pr[u]{u\in S(g)}=\frac{\Delta(1,g)}{|G|}\leq \frac{s}{|G|}$ .
Averaging over $g\in D_s$ yields \(\pr[g,u]{u\in S(g)}\leq \frac{s}{|G|}\),
where $g$ is chosen from $D_s$ independently to $u$.
Since $\expect[u]{P(u)}=\pr[u,g]{u\in S(g)}$, we have $\expect[u]{P(u)}\leq \frac{s}{|G|}$.

Let $Z = |\set{i: P(u_i) \geq \epsilon}|$. 
By Markov's inequality,
\[
\expect{Z} = \ell\pr[u]{P(u)\geq \epsilon} \leq \frac{\ell \expect[u]{P(u)}}{\epsilon} \leq \frac{\ell s}{|G|\epsilon} = \mu\,.
\]
By a Chernoff bound, for any $\lambda \geq 2e$
\[
\pr[u]{Z \geq \lambda\mu } \leq \left(\frac{e^{\lambda-1}}{\lambda^{\lambda}}\right)^{\mu}
  = \left(\frac{e}{\lambda}\right)^{\lambda\mu} e^{-\mu} \leq 2^{-\lambda\mu-\mu}\,.
\]
Note that if $Z<\lambda\mu$ then 
\[
\sum_{i=1}^{\ell} P(u_i) \leq (\ell-Z)\epsilon + Z \leq \ell\epsilon+\lambda\mu\,.
\]
Setting $\lambda\mu = \ell\epsilon$, which satisfies $\lambda \geq 2e$ due to the assumption that $\frac{s}{|G|} \leq \epsilon^2/6$, 
we have
\[
\pr[u_1,\ldots,u_\ell]{\sum_{i=1}^{\ell} P(u_i)>2\ell\epsilon}\leq \pr[u_1,\ldots,u_\ell]{Z \geq \ell\epsilon } \leq 2^{-\ell\epsilon}\,.
\]
\end{proof}

In order to apply this to our scenario without strict independence, we record the following fact about distance in total variation.

\begin{proposition}\label{Prop_TotalVariance}
Let $X_1,\ldots,X_n$ and $Y_1,\ldots,Y_n$ be discrete random variables so that
$X_i$ and $Y_i$ have the same value range.
Let $(X_i \mid A_1,\ldots,A_{i-1})$ denote the distribution of $X_i$ given  that $X_1\in A_1,\ldots, X_{i-1}\in A_{i-1}$; similarly
let $(Y_i \mid A_1,\ldots,A_{i-1})$ denote the distribution of $Y_i$ given that $Y_1 \in A_1,\ldots, Y_{i-1}\in A_{i-1}$.
Then
\[
\|(X_1,\ldots,X_n)-(Y_1,\ldots,Y_n)\|_{t.v.} \leq \|X_1-Y_1\|_{t.v.} +  \sum_{i=2}^n \Delta_i
\] 
where
\[
\Delta_i = \max_{A_1,\ldots,A_{i-1}}\|(X_i\mid A_1,\ldots,A_{i-1})-(Y_i\mid A_1,\ldots,A_{i-1})\|_{t.v.}\,.
\]
\end{proposition}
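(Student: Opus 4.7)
My plan is to prove the inequality by a hybrid (telescoping) argument. For each $0\leq k\leq n$, I define a hybrid distribution $H_k$ on $n$-tuples whose probability mass function is
\[
h_k(x_1,\ldots,x_n) \;=\; \prod_{i=1}^{k}\pr{Y_i=x_i\mid Y_1=x_1,\ldots,Y_{i-1}=x_{i-1}}\;\cdot\;\prod_{i=k+1}^{n}\pr{X_i=x_i\mid X_1=x_1,\ldots,X_{i-1}=x_{i-1}}\,;
\]
intuitively, $H_k$ first samples the prefix $(x_1,\ldots,x_k)$ according to the joint law of $(Y_1,\ldots,Y_k)$ and then extends to a full tuple using the conditional laws of $X_{k+1},\ldots,X_n$. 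By construction $H_0$ is the joint distribution of $(X_1,\ldots,X_n)$ and $H_n$ is that of $(Y_1,\ldots,Y_n)$, so the triangle inequality for total variation gives
\[
\|(X_1,\ldots,X_n)-(Y_1,\ldots,Y_n)\|_{\rm t.v.}\;\leq\;\sum_{k=1}^{n}\|H_{k-1}-H_k\|_{\rm t.v.}\,.
\]

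Next, I would bound each $\|H_{k-1}-H_k\|_{\rm t.v.}$ by a direct computation on the pmfs. Since $h_{k-1}$ and $h_k$ agree on every factor except the $k$-th, the difference $h_{k-1}-h_k$ factors as a product in which only the $k$-th factor carries a sign, namely $\pr{X_k=x_k\mid X_{<k}=x_{<k}}-\pr{Y_k=x_k\mid Y_{<k}=x_{<k}}$. Taking absolute values, summing first over the ``future'' coordinates $x_{k+1},\ldots,x_n$ (the remaining $X$-conditionals sum to $1$) and then over $x_k$, one obtains
\[
\|H_{k-1}-H_k\|_{\rm t.v.}\;=\;\expect[y_{<k}]{\left\|(X_k\mid X_{<k}=y_{<k})-(Y_k\mid Y_{<k}=y_{<k})\right\|_{\rm t.v.}}\,,
\]
where $y_{<k}=(y_1,\ldots,y_{k-1})$ is distributed as $(Y_1,\ldots,Y_{k-1})$. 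Bounding this expectation by the maximum, and noting that singleton events $A_j=\set{y_j}$ are a special case of the events appearing in the definition of $\Delta_k$, this last quantity is at most $\Delta_k$.

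Summing over $k$ and identifying the $k=1$ term (which has no conditioning) with $\|X_1-Y_1\|_{\rm t.v.}$ yields the claimed inequality. The only delicate step is the pmf factorization above: one must verify that the succeeding $X$-conditionals sum to $1$ over $(x_{k+1},\ldots,x_n)$ regardless of the values in the earlier slots, which holds because they form a valid conditional distribution on the remaining coordinates. I do not anticipate any genuine obstacle---this is the standard hybrid comparison for products of sequentially defined conditional distributions.
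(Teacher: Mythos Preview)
Your hybrid argument is correct and is the standard way to establish this type of bound. The paper's proof is morally the same telescoping, but organized differently: it proceeds by induction on~$n$, proving the two–variable case
\[
\|(X_1,X_2)-(Y_1,Y_2)\|_{\rm t.v.}\ \leq\ \|X_1-Y_1\|_{\rm t.v.}+\Delta_2
\]
from the elementary inequality $|ab-cd|\leq |a-c|+|b-d|$ for $a,b,c,d\in[0,1]$, and then peeling off the last coordinate in the inductive step. You instead build the explicit chain $H_0,\ldots,H_n$ of hybrid joint laws and bound each consecutive gap at the pmf level using the $\ell_1$ characterization of total variation. The two arguments unwind to the same sum $\sum_k\Delta_k$, but your pmf-level analysis is a bit cleaner: summing out the future coordinates makes the bound on $\|H_{k-1}-H_k\|_{\rm t.v.}$ transparent, and it sidesteps the paper's (not quite accurate) identification of $\|(X_1,X_2)-(Y_1,Y_2)\|_{\rm t.v.}$ with a maximum over rectangular events $A\times B$. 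The only cosmetic point to tidy is the convention for $\Pr[X_k=\cdot\mid X_{<k}=x_{<k}]$ when $x_{<k}$ has zero $X$-probability; fixing any probability vector there keeps each $H_k$ a bona fide distribution and does not affect the bound, since the definition of $\Delta_k$ already maximizes over all conditioning events.
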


A detailed proof of Proposition \ref{Prop_TotalVariance} can be found in the appendix.

\begin{lemma}
Suppose $\frac{s}{|G|} \leq \epsilon^2/6$.
Then a snake $X$ drawn from $\mathcal{D}_{x_0,L}$ is $2\epsilon$-sparse with probability at least $1-|G|2^{-\ell\epsilon}-1/2000$.
\end{lemma}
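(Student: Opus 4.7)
The plan is to reduce to the independent--uniform case already handled by Lemma~\ref{Lemma_Sparse_Uniform}, using the mixing estimate of Proposition~2 combined with the chain-rule bound of Proposition~\ref{Prop_TotalVariance}.

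First I would analyze the joint distribution of the chunk endpoints $(x_s,x_{2s},\ldots,x_{\ell s})$ under $\mathcal{D}_{x_0,L}$. By construction, conditioned on the whole prior history (which, for a given step, only matters through $x_{s(k-1)}$ since $x_{sk}$ is a deterministic function of $x_{s(k-1)}$ and an independent draw $g_{k-1}\sim D_s$), Proposition~2 says $x_{sk}$ is $\delta$-close to uniform on $G$. Applying Proposition~\ref{Prop_TotalVariance} with $X_i=x_{si}$ and $Y_i$ i.i.d.\ uniform on $G$, each of the $\ell$ increments contributes at most $\delta$, so the joint law of $(x_s,\ldots,x_{\ell s})$ is within total variation $\ell\delta$ of the product-uniform law on $G^{\ell}$. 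Plugging in $\ell=\sqrt{|G|}/(200s)$ and $\delta=0.1\,s/|G|^{3/2}$, a direct computation gives $\ell\delta=1/(2000|G|)$.

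Next I would fix an arbitrary vertex $v\in G$ and consider the event $E_v$ that $\sum_{k=1}^{\ell}P(x_{sk}^{-1}v)>2\ell\epsilon$. If the chunk endpoints were replaced by i.i.d.\ uniform samples $u_1,\ldots,u_\ell\in G$, then by left-translation invariance each $u_k^{-1}v$ would also be uniform on $G$, so Lemma~\ref{Lemma_Sparse_Uniform} (whose hypothesis $s/|G|\le\epsilon^2/6$ is exactly what we are assuming) gives $\Pr[E_v]\le 2^{-\ell\epsilon}$ in this idealized model. Transferring the probability of the same event back to $\mathcal{D}_{x_0,L}$ via the total-variation bound derived in the previous step raises it by at most $\ell\delta=1/(2000|G|)$.

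Finally, a union bound over the $|G|$ choices of $v$ yields that $X$ fails to be $2\epsilon$-sparse with probability at most
\[
|G|\cdot\bigl(2^{-\ell\epsilon}+1/(2000|G|)\bigr)=|G|\cdot 2^{-\ell\epsilon}+1/2000,
\]
exactly matching the claimed bound. I do not anticipate any real obstacle, since the probabilistic heart of the argument is already packaged in Lemma~\ref{Lemma_Sparse_Uniform} and the remaining work is bookkeeping. The only step that deserves a moment of care is confirming that the one-step conditional closeness of Proposition~2 really feeds into the conditional-on-everything formulation required by Proposition~\ref{Prop_TotalVariance}; this is immediate from the Markovian structure of the chunks noted above.
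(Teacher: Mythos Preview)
Your proposal is correct and follows essentially the same argument as the paper: use Proposition~\ref{Prop_TotalVariance} together with the $\delta$-mixing of chunk endpoints to transfer Lemma~\ref{Lemma_Sparse_Uniform} from the i.i.d.\ uniform case, obtaining a per-vertex failure probability of $2^{-\ell\epsilon}+\ell\delta=2^{-\ell\epsilon}+1/(2000|G|)$, then union bound over $v\in G$. The only cosmetic difference is that the paper applies Proposition~\ref{Prop_TotalVariance} directly to the translated variables $x_{sk}^{-1}v$, whereas you apply it to the $x_{sk}$ and then invoke translation invariance; the two are equivalent.
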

\begin{proof}
The proof for the lemma follows immediately by observing that for any vertex $v$,
the variables $x_{s}^{-1}v,\ldots,x_{s\ell}^{-1}v$ satisfy that 
$x_{s(k+1)}^{-1}v$ is $\delta$-close to uniform given $x_{sk}^{-1}v$.
By Proposition~\ref{Prop_TotalVariance}, 
\[
\left|\pr[X]{\sum_{k=1}^{\ell}P(x_{sk}^{-1}v)>2\ell\epsilon} - \pr[u_1,\ldots, u_{\ell}]{\sum_{i=1}^{\ell}P(u_i)>2\ell\epsilon}\right| 
\leq \ell\delta \leq \frac{1}{2000|G|}\,.
\]
From Lemma \ref{Lemma_Sparse_Uniform},
\[
\pr[X]{\sum_{k=1}^{\ell}P(x_{sk}^{-1}v)>2\ell\epsilon} \leq 2^{-\ell\epsilon}+ \frac{1}{2000|G|}\,.
\]
Summing up over $v\in G$ gives $\pr[X]{\mbox{$X$ is \emph{not} $2\epsilon$-sparse}}  \leq |G|2^{-\ell\epsilon}+ 1/2000$.
\end{proof}

We need to choose $\epsilon$ such that $|G|2^{-\ell\epsilon}\leq 1/2000$, or  
$\epsilon\geq \frac{\log|G|+O(1)}{\ell}$.

\begin{corollary}
A snake $X$ drawn from $\mathcal{D}_{x_0,L}$ is
$O\left(\frac{s\log|G|}{\sqrt{|G|}}\right)$-hitting with probability at least $0.999$.
\end{corollary}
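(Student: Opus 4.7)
The plan is to combine the preceding lemma (most snakes are $2\epsilon$-sparse) with the proposition that $\epsilon$-sparse implies $2\epsilon$-hitting, for the minimum admissible value of $\epsilon$, namely $\epsilon = \Theta(\log|G|/\ell)$. Substituting $\ell = \sqrt{|G|}/(200s)$ then yields the stated hitting parameter of order $s\log|G|/\sqrt{|G|}$.

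Concretely, first I would fix $\epsilon = (\log|G|+c)/\ell$ for a sufficiently large constant $c$ so that $|G|\,2^{-\ell\epsilon}\leq 1/2000$. With this choice, the preceding lemma gives that $X$ drawn from $\mathcal{D}_{x_0,L}$ is $2\epsilon$-sparse with probability at least $1-|G|\,2^{-\ell\epsilon}-1/2000 \geq 1-2/2000 = 0.999$. Next I would apply the sparseness-implies-hitting proposition with parameter $2\epsilon$ in place of $\epsilon$ to conclude that the same snake is $4\epsilon$-hitting (with the same probability). Since $\ell = \sqrt{|G|}/(200s)$, we have
\[
4\epsilon \;=\; \frac{4(\log|G|+c)}{\ell} \;=\; O\!\left(\frac{s\log|G|}{\sqrt{|G|}}\right),
\]
which is precisely the bound claimed.

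The only nonobvious bookkeeping is verifying the two side conditions needed by the lemma and proposition. For the sparseness lemma one needs $s/|G|\leq \epsilon^2/6$; substituting $\epsilon = \Theta(s\log|G|/\sqrt{|G|})$ this reduces to $s\log^2|G|=\Omega(1)$, which is trivially satisfied for any nontrivial graph. For the hitting proposition one needs $2\epsilon\geq 2(L-s)/|G| = 2\ell s/|G| = 1/(100\sqrt{|G|})$, which again holds comfortably given our choice of $\epsilon$. Since these verifications are routine parameter checks, I do not expect any real obstacle; the corollary is essentially a matter of stitching together the two previous statements at the boundary of their permitted parameter regime.
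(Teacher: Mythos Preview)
Your proposal is correct and follows exactly the intended route: the paper does not spell out a proof of this corollary, but the sentence preceding it (``We need to choose $\epsilon$ such that $|G|2^{-\ell\epsilon}\leq 1/2000$, or $\epsilon\geq \frac{\log|G|+O(1)}{\ell}$'') indicates precisely the parameter choice you make, after which the lemma and the sparseness-implies-hitting proposition combine as you describe. Your verification of the two side conditions is also accurate and matches what is implicitly required.
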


Putting all the pieces together and applying Theorem \ref{Theorem_Aaronson_Snakes}, we have

\begin{theorem}
For $s=O(\sqrt{|G|})$, if Cayley graph $C(G,\Gamma)$ is $s$-mixing, then
\[
\RLS(C(G,\Gamma)) = \Omega\left(\frac{\sqrt{|G|}}{s\log|G|}\right),
~~~~~
\QLS(C(G,\Gamma)) = \Omega\left(\frac{\sqrt[4]{|G|}}{\sqrt{s\log|G|}}\right)\,.
\]
\end{theorem}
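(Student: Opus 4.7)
The plan is to simply assemble the pieces that have already been set up in the preceding subsection. The snake distribution $\mathcal{D}_{x_0,L}$ with $\ell=\sqrt{|G|}/(200s)$ and $L=(\ell+1)s$ and the flicking distribution $\mathbf{D}_L$ (uniform on $\{s,2s,\ldots,\ell s\}$) are already in place, and two corollaries bound the two failure modes for goodness: a snake $X\sim\mathcal{D}_{x_0,L}$ is $0.9$-consistent with probability at least $0.999-20/|G|$, and is $\epsilon$-hitting with $\epsilon=O\!\left(\frac{s\log|G|}{\sqrt{|G|}}\right)$ with probability at least $0.999$. So my first step would be to verify the parameter hypotheses: the assumption $s=O(\sqrt{|G|})$ ensures that $\ell\geq 1$ and that the earlier hypothesis $s/|G|\leq \epsilon^2/6$ is satisfied for the choice $\epsilon=\Theta\!\left(\frac{s\log|G|}{\sqrt{|G|}}\right)$ appearing in the hitting corollary.

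Next I would take a union bound over the two failure events. Provided $|G|$ is large enough that $20/|G|$ is negligible, the probability that $X$ is both $0.9$-consistent and $\epsilon$-hitting is at least $0.99>0.9$. Hence $X$ drawn from $\mathcal{D}_{x_0,L}$ is $\epsilon$-good with respect to $\mathbf{D}_L$ with probability at least $0.9$, which is precisely the hypothesis of Theorem~\ref{Theorem_Aaronson_Snakes}. (For small $|G|$ the stated bounds are trivial, since a constant lower bound is immediate.)

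Applying Theorem~\ref{Theorem_Aaronson_Snakes} then yields
\[
\RLS(C(G,\Gamma))=\Omega(1/\epsilon)=\Omega\!\left(\frac{\sqrt{|G|}}{s\log|G|}\right),
\]
\[
\QLS(C(G,\Gamma))=\Omega(\sqrt{1/\epsilon})=\Omega\!\left(\frac{\sqrt[4]{|G|}}{\sqrt{s\log|G|}}\right),
\]
which is exactly the statement of the theorem.

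There is no real obstacle here: all the probabilistic work (consistency, sparseness implying hitting, sparseness via a Chernoff bound combined with Proposition~\ref{Prop_TotalVariance}) has already been done. The only items to double-check are the constants in the union bound and the fact that the parameter $\epsilon$ used in the hitting corollary satisfies both $\epsilon\geq 2(L-s)/|G|$ (needed for sparseness to imply hitting) and $s/|G|\leq \epsilon^2/6$ (needed for the sparseness lemma); both inequalities are straightforward consequences of $s=O(\sqrt{|G|})$, $\ell=\sqrt{|G|}/(200s)$, and $\epsilon=\Theta(\log|G|/\ell)$.
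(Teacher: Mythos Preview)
Your proposal is correct and follows precisely the paper's approach: the paper's entire proof is the single sentence ``Putting all the pieces together and applying Theorem~\ref{Theorem_Aaronson_Snakes}, we have,'' and you have spelled out exactly those pieces (the consistency corollary, the hitting corollary, a union bound, and the invocation of Theorem~\ref{Theorem_Aaronson_Snakes}). Your explicit verification of the parameter constraints ($s/|G|\le\epsilon^2/6$ and $\epsilon\ge 2(L-s)/|G|$) is more careful than the paper itself, but the argument is the same.
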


In particular, any Cayley graph $C(G,\Gamma)$ of diameter $d$ has
\[
\RLS(C(G,\Gamma)) = \Omega\left(\frac{\sqrt{|G|}}{d\log|G|}\right),
~~~~~~
\QLS(C(G,\Gamma)) = \Omega\left(\frac{\sqrt[4]{|G|}}{\sqrt{d\log|G|}}\right)\,.
\]
For comparison, applying Aldous's randomized upper bound \cite{Ref_Aldous83minimization} and Aaronson's quantum upper bound \cite{Ref_Aaronson04lower} for arbitrary Cayley graph $C(G,\Gamma)$, we have 
\[
\mbox{$\RLS(C(G,\Gamma))=O\left(\sqrt{|G||\Gamma|}\right)$ and $\QLS(C(G,\Gamma))=O\left(\sqrt[3]{|G|}\sqrt[6]{|\Gamma|}\right)$.}
\]
For example, for constant degree expanding Cayley graphs, this randomized lower bound is tight to within $O(\log^2|G|)$ of Aldous's upper bound.

\paragraph{Random Cayley graphs.} In fact, it can be showed that most Cayley graphs are $s$-mixing for $s=\Omega(\log|G|)$.
Let $g_1,\ldots, g_s$ be a sequence of group elements.
Following \cite{Ref_Babai91local}, we call an element of the form
$g_1^{a_1}\cdots g_s^{a_s}$, where $a_i\in\set{0,1}$, a \emph{subproduct} of
the sequence $g_1,\ldots, g_s$. A \emph{random subproduct} of this sequence is a subproduct 
obtained by independently choosing $a_i$ as a fair coin flip. 
A sequence $g_1,\ldots, g_s$ is called \emph{a sequence of $\delta$-uniform Erd\"os-R\'enyi (E-R) generators}
if its random subproductors are $\delta$-uniformly distributed over $G$ in the sense that
\[
(1-\delta)\frac{1}{|G|} \leq \pr[a_1,\ldots,a_s]{g_1^{a_1}\cdots g_s^{a_s} = g}\leq  (1+\delta)\frac{1}{|G|} \qquad \text{for all $g\in G$}\,.
\]

\begin{theorem}(Erd\"os and R\'enyi, See also \cite{Ref_Babai91local})
For $s\geq 2\log|G|+2\log(1/\delta)+\lambda$, a sequence of $s$ random elements of $G$
is a sequence of $\delta$-uniform E-R generators with probability at least $1-2^{-\lambda}$.
\end{theorem}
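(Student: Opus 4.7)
My plan is to control the pointwise deviation of the subproduct distribution via a second-moment argument over the random choice of $g_1,\ldots,g_s$. For a fixed sequence, let
$p_s(h)=\pr[a_1,\ldots,a_s]{g_1^{a_1}\cdots g_s^{a_s}=h}$
and write $\tilde p_s(h)=p_s(h)-1/|G|$; the sequence is $\delta$-uniform precisely when $\|\tilde p_s\|_\infty\leq \delta/|G|$. I will in fact bound the stronger $\ell_2$ quantity $\|\tilde p_s\|_2$, exploiting the crude estimate $\|\tilde p_s\|_\infty\leq \|\tilde p_s\|_2$.

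The heart of the argument is a contraction inequality for the second moment under one additional random generator. Conditioning on the last bit $a_s$ gives the recursion $p_s(h)=\tfrac12 p_{s-1}(h)+\tfrac12 p_{s-1}(hg_s^{-1})$, which transfers verbatim to $\tilde p_s$ since the constant $1/|G|$ cancels. Squaring this recursion and summing over $h\in G$ yields
\[
\|\tilde p_s\|_2^2=\tfrac12\|\tilde p_{s-1}\|_2^2
+\tfrac12\sum_{h\in G}\tilde p_{s-1}(h)\tilde p_{s-1}(hg_s^{-1}),
\]
where the change of variables $k=hg_s^{-1}$ collapses the two pure squares into one copy of $\|\tilde p_{s-1}\|_2^2$. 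Now take expectation over a uniformly random $g_s$: for each fixed $h$, the map $g_s\mapsto hg_s^{-1}$ is a bijection of $G$, so $\expect[g_s]{\tilde p_{s-1}(hg_s^{-1})}=|G|^{-1}\sum_k\tilde p_{s-1}(k)=0$, and the cross-term vanishes in expectation. This delivers
$\expect[g_s]{\|\tilde p_s\|_2^2\mid g_1,\ldots,g_{s-1}}=\tfrac12\|\tilde p_{s-1}\|_2^2$. Iterating via the tower rule, starting from the point-mass $p_0$ at the identity which satisfies $\|\tilde p_0\|_2^2=1-1/|G|\leq 1$, produces $\expect{\|\tilde p_s\|_2^2}\leq 2^{-s}$.

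Markov's inequality then bounds
\(\pr{\|\tilde p_s\|_2^2>\delta^2/|G|^2}\leq 2^{-s}|G|^2/\delta^2,\)
and the right-hand side is at most $2^{-\lambda}$ exactly when $s\geq 2\log|G|+2\log(1/\delta)+\lambda$, matching the hypothesis. I do not foresee a serious obstacle: the proof is self-contained and sidesteps representation theory by using only invariance of the uniform measure on $G$ under inversion. The one point requiring care is verifying that the cross-term truly has zero expectation even in the non-abelian setting, which hinges solely on $g_s\mapsto hg_s^{-1}$ being a bijection of $G$ for each fixed $h$.
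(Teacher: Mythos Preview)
Your argument is correct. The recursion $p_s(h)=\tfrac12 p_{s-1}(h)+\tfrac12 p_{s-1}(hg_s^{-1})$, the collapse of the two pure squares to $\tfrac12\|\tilde p_{s-1}\|_2^2$ via the bijection $h\mapsto hg_s^{-1}$, the vanishing of the cross-term in expectation because $\sum_k \tilde p_{s-1}(k)=0$, and the final Markov step all check out; the arithmetic $2^{-s}|G|^2/\delta^2\le 2^{-\lambda}$ indeed rearranges to the stated hypothesis on $s$.

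There is, however, nothing in the paper to compare against: the authors do not prove this theorem but merely quote it as a classical result of Erd\H{o}s and R\'enyi (with a pointer to Babai's survey). Your second-moment contraction argument is in fact essentially the standard proof that appears in that literature, so your write-up is a faithful reconstruction of the cited result rather than an alternative route.
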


Clearly, any Cayley graph determined by an $s$-length sequence of $\delta$-uniform E-R generators is $s$-mixing.
So applying our lower bounds for arbitrary Cayley graphs and the E-R theorem, we have
\begin{proposition}
Let $s\geq 5\log|G|-2\log s+\lambda$. With probability at least $1-2^{-\lambda}$, a random Cayley graph $C(G,\Gamma)$ determined by a sequence
of $s$ random group elements has
\[
O(\sqrt{|G|s}) \geq \RLS(C(G,\Gamma)) \geq \Omega\left(\frac{\sqrt{|G|}}{s\log|G|}\right)
\;\text{and}\;
O(\sqrt[3]{|G|}\sqrt[6]{s}) \geq \QLS(C(G,\Gamma)) \geq \Omega\left(\frac{\sqrt[4]{|G|}}{\sqrt{s\log|G|}}\right)\,.
\]
\end{proposition}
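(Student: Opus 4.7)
The plan is to combine three ingredients already available in the excerpt: the preceding theorem (lower bounds for $s$-mixing Cayley graphs), the Erd\"os--R\'enyi theorem (random generators yield $\delta$-uniform E-R generators with high probability), and the Aldous/Aaronson upper bounds for arbitrary Cayley graphs cited immediately before the proposition.

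The upper bounds require essentially no work. The stated bounds $\RLS(C(G,\Gamma))=O(\sqrt{|G||\Gamma|})$ and $\QLS(C(G,\Gamma))=O(\sqrt[3]{|G|}\sqrt[6]{|\Gamma|})$ are monotone in $|\Gamma|$, and a sequence of $s$ random group elements yields a generating set of size at most $s$. Substituting and absorbing constants gives the $O(\sqrt{|G|s})$ and $O(\sqrt[3]{|G|}\sqrt[6]{s})$ upper bounds deterministically, independent of the random choice of generators.

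For the lower bounds I will argue that, with probability at least $1-2^{-\lambda}$, the resulting Cayley graph is $s$-mixing, and then invoke the preceding theorem as a black box. I will set $\delta := 0.2\,s/|G|^{3/2}$, so that $2\log(1/\delta) = 3\log|G|-2\log s + O(1)$, and the Erd\"os--R\'enyi threshold $s \geq 2\log|G|+2\log(1/\delta)+\lambda$ matches the hypothesis $s \geq 5\log|G|-2\log s + \lambda$ up to an absorbed additive constant. Conditioned on this good event, the sequence $g_1,\ldots,g_s$ is a $\delta$-uniform E-R generator; the distribution of a random subproduct $g_1^{a_1}\cdots g_s^{a_s}$ is supported on words of length at most $s$ in $\Gamma\cup\Gamma^{-1}$, and so lies in the ball $B(s)$ of $C(G,\Gamma)$. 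Its pointwise deviation from $1/|G|$ is bounded by $\delta/|G|$ by the E-R property, so the total variation distance between (the extension to $G$ of) this distribution and the uniform distribution over $G$ is at most $\delta/2 = O(s/|G|^{3/2})$. Hence $C(G,\Gamma)$ is $s$-mixing, and the preceding theorem gives the claimed randomized and quantum lower bounds.

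The only delicate point is balancing the parameters: $\delta$ must be at most $O(s/|G|^{3/2})$ to certify $s$-mixing, yet the E-R threshold grows with $\log(1/\delta)$, so taking $\delta$ smaller would inflate the required lower bound on $s$. The forced choice $\delta = \Theta(s/|G|^{3/2})$ produces exactly the condition $s \geq 5\log|G|-2\log s + \lambda$ appearing in the statement. Beyond this bookkeeping there is no substantial new combinatorial content, which is what makes the proof short.
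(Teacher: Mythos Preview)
Your proposal is correct and follows exactly the route the paper intends: apply the Erd\"os--R\'enyi theorem with $\delta=\Theta(s/|G|^{3/2})$ to conclude $s$-mixing with probability $\geq 1-2^{-\lambda}$, then invoke the preceding theorem for the lower bounds and the Aldous/Aaronson bounds with $|\Gamma|\leq s$ for the upper bounds. The paper states this in one line (``applying our lower bounds for arbitrary Cayley graphs and the E-R theorem''), and your write-up simply fills in the parameter bookkeeping---including the derivation of the threshold $5\log|G|-2\log s+\lambda$ from $2\log|G|+2\log(1/\delta)+\lambda$---that the paper leaves implicit.
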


\subsection{Extending to Vertex-Transitive Graphs}
Our approach above for Cayley graphs can be easily extended to vertex-transitive graphs.
We shall describe here how to define a snake distribution $\mathcal{D}_{x_0,L}$ similar to that for a Cayley graph.
Consider a vertex-transitive graph $G=(V,E)$ with $N=|V|$, and let $d$ be the diameter of $G$.
We fix an arbitrary vertex $v_0\in V$. For each vertex $v\in V$, we also fix an
extended shortest path $S(v)=(v_1,\ldots,v_d)$ of length $d$ from $v_0$ to $v$. ($v_0$ is omitted in $S(v)$ for technical reasons.) 
That is, $(v_0,\ldots,v_r)$ is the actual shortest path from $v_0$ to $v$,
where $r=\Delta(v_0,v)$, and $v_i=v$ for all $i\geq r$.

Since the automorphism group of $G$ acts transitively on $V$,
we can fix an automorphism $\sigma_x$, for each $x\in V$, so that $\sigma_x(v_0)=x$.
Hence, for any $x,v\in V$, the sequence $\sigma_xS(v)$ is
the extended shortest path from $x$ to $\sigma_x(v)$. 
So now we can determine the $k$th chunk of a snake as the sequence $\sigma_{x_{dk}}S(u_k)$,
where $x_{dk}$ is the endpoint of the $(k-1)$th chunk of the snake, and $u_k$ is an independently and uniformly random vertex.
Let $P(x)=\pr[u]{x\in S(u)}$, where $u$ is chosen from $V$ uniformly at random.
The condition for a snake $X=(x_0,\ldots,x_{(\ell+1)d})$ 
to be \emph{$\epsilon$-sparse} is now redefined as
\[
\sum_{k=1}^{\ell} P\left(\sigma^{-1}_{x_{dk}}(v)\right)\leq \ell\epsilon ~~~\mbox{for all $v\in V$.}
\]

Observe that, given $x_{dk}$, the endpoint $x_{dk+k}=\sigma_{x_{dk}}(u_k)$ of the $k$th chunk 
is a uniformly random vertex, since $\sigma_{x_{dk}}$ is a bijective.
Also clearly, $\sigma_x \neq \sigma_y$ for any $x\neq y$ because
$\sigma_x$ and $\sigma_y$ send $v_0$ to different places.
This means there is a one-to-one correspondence $x\leftrightarrow \sigma_x$ 
between $V$ and the set of automorphisms $\set{\sigma_x:x\in V}$.
Therefore, if $x$ is uniformly distributed over $V$, then so is the vertex at any given position
in $\sigma_x S$, for any sequence $S$ of vertices. 
It follows that in our snake $X=(x_0,\ldots,x_L)$, for all $t\geq k$, $x_{dk+t}$ is uniformly distributed over $V$ given $x_{dk}$.
With this snake distribution, we can similarly follow the proof for Cayley graphs to prove the lower bounds
for vertex-transitive graphs as given in Theorem \ref{Theorem_Vertex-Transitive}.

\subsection*{Acknowledgements}

We gratefully acknowledge Scott Aaronson for discussing his previous work with us and showing us the trick for removing the requirement of snake non-self-intersection. \newfix{We would like to thank anonymous referees for many helpful comments.}

\bibliography{Localsearch}
\bibliographystyle{plain}

\newpage
\appendix
\section{Appendix}

\subsection{Quantum and Relational Adversary Methods}

The quantum adversary method~\cite{Ref_Ambainis00quantum} is a powerful tool underlying
many proofs of quantum lower bounds. The classical counterpart applied above is the relational adversary method \cite{Ref_Aaronson04lower}. The central intuition of these adversary methods is to make it hard to distinguish
``related'' input sets. Technically, consider two input sets $\mathcal{A}$ and $\mathcal{B}$ for a function 
{$F:I^n\to [m]$} so that $F(A)\neq F(B)$ for all $A\in\mathcal{A}$ and $B\in\mathcal{B}$. Here, an input to function $F$ is a black-box function $A:[n]\to I$. The oracle for an input $A$ answers queries of the form $A(x)=?$. \fix{If $A$ and $B$ are the two inputs that have the same value at every queryable location, then we must have $F(A)=F(B)$.}
Define a ``relation'' function $R(A,B)\geq 0$ on $\mathcal{A}\times \mathcal{B}$.
Two inputs $A$ and $B$ are said to be related if $R(A,B)>0$.
Then, for $A\in\mathcal{A}$, $B\in\mathcal{B}$, and a \fix{queryable} location $x\in [n]$, let
\[
\begin{array}{ll}
M(A)   = \sum_{B'\in\mathcal{B}} R(A,B'),                  & M(B)   = \sum_{A'\in\mathcal{A}} R(A',B)\\
M(A,x) = {\sum_{B'\in\mathcal{B}:A(x)\neq B'(x)} R(A,B')}, & M(B,x) = {\sum_{A'\in\mathcal{A}:A'(x)\neq B(x)} R(A',B)}\,.\\
\end{array}
\]  
Intuitively, the fraction \fix{$M(A,x)/M(A)$ (resp. $M(B,x)/M(B)$) measures how hard it is to distinguish input $A$  (resp. $B$) with related inputs in $\mathcal{B}$ (resp. $\mathcal{A}$) by queying at location $x$.} 
Formally, if there are such input sets $\mathcal{A}, \mathcal{B}$ and relation function $R(A,B)$, then

\begin{theorem}(Ambainis) The number of quantum queries needed to evaluate $F$ with probability at least $0.9$
is $\Omega(M_{\rm geom})$, where
\[
M_{\rm geom} = \mathop{\min_{A\in\mathcal{A}, B\in\mathcal{B}, x}}_{R(A,B)>0, A(x)\neq B(x)} \sqrt{\frac{M(A)}{M(A,x)}\frac{M(B)}{M(B,x)}}\,.
\]
\end{theorem}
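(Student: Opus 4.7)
The plan is to adapt Ambainis's quantum adversary argument~\cite{Ref_Ambainis00quantum} to this setting. For each possible input $I\in\mathcal{A}\cup\mathcal{B}$, let $|\phi^I_t\rangle$ denote the state of the algorithm (workspace together with query register) after $t$ queries on input $I$. Since no queries have yet been made, $|\phi^I_0\rangle$ is independent of $I$. On the other hand, if the algorithm outputs $F(A)$ on any $A\in\mathcal{A}$ and $F(B)$ on any $B\in\mathcal{B}$ with probability at least $0.9$, then the standard distinguishability-of-quantum-states inequality forces $|\langle\phi^A_T|\phi^B_T\rangle|\leq 1-c$ for some absolute constant $c>0$.

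Next I would introduce the progress measure
\[
W_t \;=\; \sum_{A\in\mathcal{A},\,B\in\mathcal{B}} R(A,B)\,\langle\phi^A_t|\phi^B_t\rangle.
\]
Input-independent unitaries applied between queries preserve every inner product $\langle\phi^A_t|\phi^B_t\rangle$, so only query steps change $W_t$. From the endpoint bounds above one obtains $|W_0-W_T|\geq c\sum_{A,B}R(A,B)$, so it suffices to prove that a single query perturbs $W_t$ by at most $O\!\left(\sum_{A,B}R(A,B)/M_{\rm geom}\right)$; chaining $T$ such bounds then gives $T=\Omega(M_{\rm geom})$.

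The heart of the proof is bounding this per-query change. Writing $\alpha^I_{t,x}$ for the amplitude with which the query register of $|\phi^I_t\rangle$ points to location $x$, only positions $x$ with $A(x)\neq B(x)$ contribute, and a direct calculation followed by a first use of Cauchy-Schwarz yields
\[
|W_{t+1}-W_t| \;\leq\; 2\sum_{x}\sum_{(A,B):\,A(x)\neq B(x)} R(A,B)\,|\alpha^A_{t,x}|\,|\alpha^B_{t,x}|.
\]
To surface the marginals $M(A),M(B),M(A,x),M(B,x)$, I would split each summand symmetrically as a geometric mean of two terms weighted respectively by $M(A)/M(A,x)$ and $M(B)/M(B,x)$, then apply Cauchy-Schwarz a second time over the pair $(B,x)$ in the first factor and $(A,x)$ in the second. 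After this splitting, the ratio $\sqrt{M(A)M(B)/(M(A,x)M(B,x))}$ appears as a common numerical factor whose worst case over $(A,B,x)$ is, by definition, $M_{\rm geom}$; what remains of the two square roots collapses via $\sum_x|\alpha^I_{t,x}|^2\leq 1$ and the definitions of $M(A),M(B)$ into a quantity bounded by $\sum_{A,B}R(A,B)$.

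Combining the per-query bound with the endpoint bound on $|W_0-W_T|$ gives $T=\Omega(M_{\rm geom})$. The main obstacle I anticipate is the second Cauchy-Schwarz step: the $R(A,B)$ weight has to be distributed evenly between the two square roots, and the marginals $M(A,x),M(B,x)$ must reassemble into $M(A),M(B)$ under the correct summation order, rather than collapsing into a trivially weak estimate. This is the delicate bookkeeping step of Ambainis's original argument, and it is precisely where the symmetric geometric-mean form of $M_{\rm geom}$ emerges.
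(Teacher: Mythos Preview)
The paper does not prove this theorem at all: it is stated in the appendix as a background result attributed to Ambainis~\cite{Ref_Ambainis00quantum}, with no accompanying proof. So there is no ``paper's own proof'' to compare your proposal against.

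That said, your outline is a faithful sketch of the standard weighted quantum adversary argument: the progress measure $W_t=\sum_{A,B}R(A,B)\langle\phi^A_t\mid\phi^B_t\rangle$, the endpoint gap coming from output-distinguishability, and the per-query drop controlled via the two-stage Cauchy--Schwarz splitting are exactly how the theorem is proved in the literature. One small point of care: in the definition of $W_t$ you use the raw inner product rather than its absolute value, so to get $|W_0-W_T|\geq c\sum_{A,B}R(A,B)$ cleanly you should either take absolute values in the progress measure or note that $W_0$ is real and bound $|W_T|$ directly. The second Cauchy--Schwarz step you flag as ``delicate'' is indeed where the marginals reassemble; the usual device is to insert factors $\sqrt{M(A,x)/M(B,x)}$ and their reciprocals before applying the AM--GM or Cauchy--Schwarz inequality, so that after summing over $B$ (resp.\ $A$) with $A(x)\neq B(x)$ the definitions of $M(A,x)$ and $M(B,x)$ collapse the inner sums exactly. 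Your sketch captures this correctly in spirit, if not in full detail.
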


\begin{theorem}(Aaronson) The number of randomized queries needed to evaluate $F$ with probability at least $0.9$
is $\Omega(M_{\rm max})$, where
\[
M_{\rm max} = \mathop{\min_{A\in\mathcal{A}, B\in\mathcal{B}, x}}_{R(A,B)>0, A(x)\neq B(x)} \max\set{\frac{M(A)}{M(A,x)},~ \frac{M(B)}{M(B,x)}}\,.
\]
\end{theorem}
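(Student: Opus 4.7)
The plan is to establish the bound via a hybrid/coupling argument comparing the query-transcript distributions induced by the algorithm on related input pairs, which is the classical template for relational adversary lower bounds. Fix any randomized algorithm $\mathcal{A}$ that makes $T$ queries and outputs $F$ correctly with probability at least $0.9$ on every input. For each input $I$, let $\mu_I$ denote the distribution, over $\mathcal{A}$'s internal randomness, of the length-$T$ sequence of query locations produced by $\mathcal{A}$ on $I$. Because $F(A)\neq F(B)$ and both answers are correct with probability at least $0.9$, a direct coupling argument forces $\|\mu_A-\mu_B\|_{\rm t.v.}\geq 0.8$ for every pair with $R(A,B)>0$.

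The core estimate is a hybrid bound. Running $\mathcal{A}$ on $A$ and on $B$ with the same randomness, the two executions produce identical transcripts up to the first step at which $\mathcal{A}$ queries some location $x$ with $A(x)\neq B(x)$. Union-bounding over this first-divergence event yields, for every related pair,
\[
\|\mu_A-\mu_B\|_{\rm t.v.} \;\leq\; \sum_{t=1}^{T}\sum_{x:\,A(x)\neq B(x)} p_t(A,x),
\]
where $p_t(A,x)$ is the unconditional probability that $\mathcal{A}$ queries location $x$ at step $t$ when the input is $A$. Multiplying by $R(A,B)$ and summing over related pairs gives
\[
0.8\sum_{A,B}R(A,B)\;\leq\;\sum_{t=1}^{T}\sum_{A}\sum_{x} p_t(A,x)\,M(A,x),
\]
and an entirely symmetric hybrid, started from $B$, produces the same inequality with $M(B,x)$ and $p_t(B,x)$ in place of $M(A,x)$ and $p_t(A,x)$.

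The final step converts these two inequalities into $T=\Omega(M_{\rm max})$ using the definition of $M_{\rm max}$. Call the pair $(A,x)$ \emph{small} if $M(A,x)/M(A)\leq 1/M_{\rm max}$. The defining inequality of $M_{\rm max}$ guarantees that for every valid triple $(A,B,x)$ (with $R(A,B)>0$ and $A(x)\neq B(x)$) either $(A,x)$ or $(B,x)$ is small. I would split every triple according to this alternative: on triples where $(A,x)$ is small, use the $A$-side hybrid bound, and on the remaining triples use the $B$-side bound. On the small side, $M(A,x)\leq M(A)/M_{\rm max}$, so $\sum_{x}p_t(A,x)M(A,x)\leq M(A)/M_{\rm max}$; summing over $A$ and $t$ contributes at most $(T/M_{\rm max})\sum_A M(A) = (T/M_{\rm max})\sum_{A,B}R(A,B)$. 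The symmetric portion yields an identical bound, and combining these with the $0.8$ lower bound on $\sum R(A,B)$ gives $T=\Omega(M_{\rm max})$.

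The main obstacle is precisely this final accounting. Because $M_{\rm max}$ is defined by a maximum rather than the geometric mean that drives the quantum adversary bound, one cannot simply marry the two symmetric hybrid inequalities by Cauchy--Schwarz. Instead, one must carefully decompose the universe of triples so that every triple is charged against its ``small'' side, confirm that the decomposition covers each triple exactly once, and verify that the weighted sums on the left still dominate a constant multiple of $\sum R(A,B)$. Orchestrating this partition, rather than producing the hybrid inequality itself, is the delicate combinatorial crux of the proof.
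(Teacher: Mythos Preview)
The paper does not prove this theorem; it is quoted from Aaronson's paper as a black-box tool, so there is no in-paper argument to compare against. Your proposal should therefore be read as a reconstruction of Aaronson's original proof, and its overall architecture---the shared-randomness coupling, the hybrid bound on the first distinguishing query, and the weighting by $R(A,B)$---is the standard one.

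Two points deserve attention. First, a small slip: $\mu_I$ cannot be the distribution of query \emph{locations} alone. An algorithm that always queries position $1$ and then outputs according to the answer has $\mu_A=\mu_B$ (both the point mass on the one-term sequence $(1)$) even though $F(A)\neq F(B)$; so $\|\mu_A-\mu_B\|_{\rm t.v.}\geq 0.8$ is false as stated. You need $\mu_I$ to be the distribution of full transcripts (queries together with the answers), or---cleaner---to argue directly that under the coupling the probability of ever issuing a distinguishing query is at least $0.8$. You already have the coupling in hand, so this is a one-line repair.

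Second, and this is the substantive gap you yourself flag: the per-triple split does not follow from the two \emph{global} inequalities you wrote down. Having the $A$-side inequality and the $B$-side inequality separately does not let you charge some triples $(A,B,x)$ to the $A$-side and others to the $B$-side, because the two bounds carry different weights $p_t(A,\cdot)$ versus $p_t(B,\cdot)$. The missing observation is that, under the coupling, the two executions are \emph{identical through step $t$} at the moment of first divergence, so the probability that the first distinguishing query occurs at step $t$ and location $x$ is simultaneously $\leq p_t(A,x)$ and $\leq p_t(B,x)$. This lets you write, for each fixed pair $(A,B)$,
\[
0.8 \;\leq\; \sum_{t}\Bigl(\sum_{x:\,(A,x)\ \text{small}} p_t(A,x)\;+\;\sum_{x:\,(A,x)\ \text{not small}} p_t(B,x)\Bigr),
\]
since on the second group $(B,x)$ must be small by the definition of $M_{\rm max}$. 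Multiplying by $R(A,B)$, summing, and using $M(A,x)\leq M(A)/M_{\rm max}$ on small pairs (and symmetrically for $B$) gives $0.8\,W\leq 2TW/M_{\rm max}$, hence $T=\Omega(M_{\rm max})$. Once you insert this one line, your argument is complete and matches Aaronson's.
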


\remove{
Zhang \cite{Ref_Zhang05power} recently proposed an improvement of Ambainis' quantum adversary method by 
replacing $M(A,x)$ and $M(B,x)$ in the formula of $M_{\rm geom}$ with
\[
M^*(A,x) = {\sum_{B'\in\mathcal{B}:A(x)\neq B'(x)} \alpha(A,B',x)},~~~~
M^*(B,x) = {\sum_{A'\in\mathcal{A}:A'(x)\neq B(x)} \beta(A',B,x)}\,.\\
\]
where $\alpha(A,B,x)$ and $\beta(A,B,x)$ are positive functions such that
$\alpha(A,B,x)\beta(A,B,x)\geq R^2(A,B)$ for all $A\in\mathcal{A}$, $B\in\mathcal{B}$, and $x\in[n]$.
}

\subsection{Proofs}
\subsubsection{Continued proof for Aaronson's theorem (Theorem~\ref{Theorem_Aaronson_Snakes})}
\begin{proof}
To apply the quantum and relational adversary method for the decision problem, define the input sets 
\fix{$\mathcal{A}=\set{(X,g_{X,0}): X\in \mathcal{D}^*}$ and 
$\mathcal{B}=\set{(Y,g_{Y,1}): Y\in \mathcal{D}^*}$,} where $\mathcal{D}^*$ denotes the set of $\epsilon$-good snakes drawn from $\mathcal{D}_{x_0,L}$. 
For simplicity, we write $A_X$ as $(X,g_{X,0})$, and $B_Y$ as $(Y,g_{Y,1})$.
For $A_X\in\mathcal{A}$ and $B_Y\in \mathcal{B}$, define relation function $R(A_X,B_Y) = w(X,Y)$ if $X$ and $Y$ are consistent \fix{and $x_L\neq y_L$}, 
and $R(A_X,B_Y) = 0$ otherwise, where $w(X,Y)$ is determined as follows.
Let  $p(X)$ be the probability of drawing snake $X$ from $\mathcal{D}_{x_0,L}$, 
and  let 
\[
w(X,Y) = p(X) \pr[j,Z]{Z = Y ~|~ Z_{0\to j}=X_{0\to j}}\,.
\]
\begin{claim} For any snakes $X,Y\in\mathcal{D}_{x_0,L}$,
we have $w(X,Y) = w(Y,X)$.
\end{claim}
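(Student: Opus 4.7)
The plan is to unpack the conditional probability in the definition of $w(X,Y)$ using Bayes' rule and exhibit a manifestly symmetric expression. Let $q(j) = \mathbf{D}_L(j)$ denote the probability of choosing position $j$, let $p(X)$ be the probability of drawing $X$ from $\mathcal{D}_{x_0,L}$, and, for a head $H$ of length $j$, let $\pi(j, H) = \Pr_Z[Z_{0\to j} = H]$ be the marginal probability that a random snake $Z \sim \mathcal{D}_{x_0,L}$ agrees with $H$ on its first $j$ coordinates.

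First I would rewrite the conditional probability as
\[
\Pr_{j,Z}[Z = Y \mid Z_{0\to j} = X_{0\to j}] = \sum_j q(j)\, \Pr_Z[Z = Y \mid Z_{0\to j} = X_{0\to j}].
\]
The inner term vanishes unless $X_{0 \to j} = Y_{0 \to j}$, and when this occurs Bayes' rule gives
\[
\Pr_Z[Z = Y \mid Z_{0\to j} = X_{0\to j}] = \frac{p(Y)}{\pi(j,\,X_{0\to j})}.
\]
Substituting into the definition of $w(X,Y)$ yields
\[
w(X,Y) = p(X)\, p(Y) \sum_j q(j)\, \mathbf{1}\!\left[X_{0\to j} = Y_{0\to j}\right] \frac{1}{\pi(j,\,X_{0\to j})}.
\]

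The key observation is now immediate: whenever the indicator is nonzero we have $X_{0\to j} = Y_{0\to j}$, so $\pi(j, X_{0\to j}) = \pi(j, Y_{0\to j})$, and the whole expression is symmetric under the interchange $X \leftrightarrow Y$. Hence $w(X,Y) = w(Y,X)$.

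I do not anticipate any serious obstacle; the claim is essentially a bookkeeping identity reflecting the fact that ``$X$ and $Y$ share a length-$j$ head'' is a symmetric condition and that the flicking procedure, conditioned on such a shared head, draws the remainder from the same marginal distribution regardless of which snake one starts with. The only mild care needed is to handle $j = 0$ (where the shared-head condition is trivially true and $\pi(0, x_0) = 1$), which poses no issue since $\mathbf{D}_L$ is supported on $\{0, \dots, L-1\}$ and the same formula applies uniformly.
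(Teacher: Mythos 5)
Your proof is correct and is essentially the paper's own argument in a slightly different dress: the paper writes $p(X)q_j(X,Y)$ as $\pr{E}\cdot\pr{Z_{j+1\to L}=X_{j+1\to L}\mid E}\cdot\pr{Z_{j+1\to L}=Y_{j+1\to L}\mid E}$ via the chain rule, which is exactly your $p(X)p(Y)/\pi(j,X_{0\to j})$ after multiplying and dividing by the head probability. Both proofs reduce the claim to the observation that the shared-head condition is symmetric in $X$ and $Y$, so no further comment is needed.
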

\begin{proof}(of the claim) 
Fix $j\in \set{0,\ldots,L-1}$ and let $q_j(X,Y) = \pr[Z]{Z = Y ~|~ Z_{0\to j}=X_{0\to j}}$.
We want to show
\[
p(X)q_j(X,Y)=p(Y)q_j(Y,X)\,.
\]
Assume $X_{0\to j}=Y_{0\to j}$, otherwise $q_j(X,Y)=q_j(Y,X)=0$.
Then letting $Z$ be drawn from $\mathcal{D}_{x_0,L}$ and let $E$ be the event $Z_{0\to j}=X_{0\to j}=Y_{0\to j}$,
we have
\[
\begin{split}
p(X)q_j(X,Y) &= \pr[Z]{E}\cdot\pr[Z]{Z_{j+1\to L}=X_{j+1\to L} | E}\cdot \pr[Z]{Z_{j+1\to L}=Y_{j+1\to L} | E}\\ 
 &= \pr[Z]{E} \cdot\pr[Z]{Z_{j+1\to L}=Y_{j+1\to L} | E}\cdot\pr[Z]{Z_{j+1\to L}=X_{j+1\to L} | E}\\ 
&= p(Y)q_j(Y,X) \,.
\end{split}
\]
completing the proof for the claim.
\end{proof}
As in Aaronson's original proof, we won't be able to take the whole input sets
$\mathcal{A}$ and $\mathcal{B}$ defined above because of the fact that not all snakes are good.
Instead, we will take only a subset of each of these input sets that would be hard enough to distinguish.
This is done by applying Lemma 8 in \cite{Ref_Aaronson04lower}, which states as follows.
\begin{lemma}\label{Aaronson_Lemma_8}
Let $p(1),\ldots,p(m)$ be positive reals such that $\sum_{i}p(i)\leq 1$.
Let $R(i,j)$, for $i,j\in\set{1,\ldots,m}$, be nonnegative reals satisfying $R(i,j)=R(j,i)$ and
$\sum_{i,j}R(i,j)\geq r$. Then there exists a nonempty subset $U\in \set{1,\ldots,m}$ such that
$\sum_{j\in U} R(i,j)\geq rp(i)/2$ for all $i\in U$.
\end{lemma}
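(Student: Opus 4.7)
My plan is to prove this by a \emph{greedy peeling} argument: start with $U = \{1,\ldots,m\}$, repeatedly delete any index that currently violates the desired inequality, and show via a mass-counting argument that the process cannot empty $U$. For any $T \subseteq \{1,\ldots,m\}$ write $S(T) = \sum_{i,j \in T} R(i,j)$; by hypothesis $S(\{1,\ldots,m\}) \geq r$.

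The iteration is as follows: while some $k \in U$ satisfies $\sum_{j \in U} R(k,j) < r\, p(k)/2$, set $U := U \setminus \{k\}$. The key accounting step is to bound how much $S(U)$ can drop in a single round. Because $R$ is symmetric, removing $k$ deletes both row $k$ and column $k$, with the diagonal entry counted once, so
\[
S(U) - S(U \setminus \{k\}) \;=\; 2\sum_{j \in U} R(k,j) \,-\, R(k,k) \;\leq\; 2\sum_{j \in U} R(k,j) \;<\; r\, p(k),
\]
where the strict inequality uses the very condition that triggered the deletion of $k$.

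If the process ever emptied $U$, then summing these strict bounds over all deletions would give total mass loss strictly less than $r \sum_k p(k) \leq r$, whereas the total loss must equal $S(\{1,\ldots,m\}) - S(\emptyset) \geq r$, a contradiction. Hence the procedure terminates with $U$ nonempty, and by construction every remaining $i \in U$ satisfies $\sum_{j \in U} R(i,j) \geq r\, p(i)/2$, which is exactly what was claimed.

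The only delicate point, and the step I expect to need the most care, is the factor-of-$2$ bookkeeping in the mass-decrease formula: one must be explicit that $\sum_{i,j}$ ranges over ordered pairs and avoid double-counting $R(k,k)$ when removing $k$. Once this identity is in hand, combining it with $\sum p(k) \leq 1$ and the strict per-step inequality makes the contradiction with $S(\emptyset)=0$ immediate.
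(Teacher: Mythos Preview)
Your greedy peeling argument is correct: the identity $S(U) - S(U\setminus\{k\}) = 2\sum_{j\in U}R(k,j) - R(k,k)$ is exactly right for sums over ordered pairs, and combining the per-step strict bound with $\sum_i p(i)\leq 1$ yields the contradiction $r \leq S(\{1,\ldots,m\}) < r$ if $U$ were ever emptied.

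As for comparison with the paper: the paper does not supply its own proof of this lemma at all---it simply quotes it as Lemma~8 of Aaronson~\cite{Ref_Aaronson04lower} and applies it as a black box. Your argument is in fact the standard proof of this statement (and essentially Aaronson's original one): iteratively delete violating indices and track the total mass $\sum_{i,j\in U}R(i,j)$. So there is no genuine methodological difference to report; you have reconstructed the intended proof.
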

To apply this lemma, we need  a lower bound for the sum $\sum_{X,Y\in \mathcal{D^*}}R(A_X, B_Y)$.
Let $E(X,Y)$ denote the event that snakes $X$ and $Y$ are consistent \fix{and $x_L\neq y_L$}. 
For any $X\in\mathcal{D}^*$, we have
\[
\sum_{Y: E(X,Y)} w(X,Y) = p(X)\pr[j,Y]{E(X,Y)~|~Y_{0\to j}=X_{0\to j}} \geq 0.9p(X)\,.
\]
Hence, since a snake drawn from $\mathcal{D}_{x_0,L}$ is good with probability at least $0.9$,
\[
\sum_{X,Y: E(X,Y)} w(X,Y) \geq 0.9\sum_{X\in \mathcal{D}^*}p(X) \geq 0.9\times 0.9 \geq 0.8\,.
\]
By the union bound,
\[
\sum_{X,Y\in \mathcal{D^*}}R(A_X, B_Y) \geq \sum_{X,Y: E(X,Y)} w(X,Y) - \sum_{X\not\in \mathcal{D}^*} p(X) - \sum_{Y\not\in \mathcal{D}^*} p(Y)
\geq 0.8-0.1 -0.1 = 0.6\,.
\]
So, by Lemma \ref{Aaronson_Lemma_8}, there exists a nonempty subset 
$\widetilde{\mathcal{D}}\subset \mathcal{D}^*$ so that for all $X, Y\in \widetilde{\mathcal{D}}$,
\begin{align*}
\sum_{Y'\in \widetilde{\mathcal{D}}} R(A_X, B_{Y'}) &\geq 0.3p(X)\,,\\
\sum_{X'\in \widetilde{\mathcal{D}}} R(A_{X'}, B_{Y}) &\geq 0.3p(Y)\,.
\end{align*}
So now we take the input sets $\widetilde{\mathcal{A}}=\set{A_X: X\in\widetilde{D}}$
and $\widetilde{\mathcal{B}}=\set{B_Y: Y\in\widetilde{D}}$. 
We have shown that $M(A_X)\geq 0.3p(X)$ and $M(B_Y)\geq 0.3p(Y)$ for any $A_X\in\widetilde{\mathcal{A}}$ and  $B_Y\in \widetilde{\mathcal{B}}$.
Since the snake part in the inputs can not be queried, we only 
care about the measure for distinguishing $A_X, B_Y$ with their related inputs by querying the function part (i.e. $g_{X,0}$ or $g_{Y,1}$) in the inputs.
Formally, we focus on lower-bounding $M(A_X,v)$ and $M(B_Y,v)$ for inputs $A_X\in\widetilde{\mathcal{A}}, B_Y\in \widetilde{\mathcal{B}}$
for which $R(A_X,B_Y)>0$ and $g_{X,0}(v)\neq g_{Y,1}(v)$. 
\fix{We remark that since $R(A_X,B_Y)>0$, the event $E(X,Y)$ must hold, which implies that for all vertex $v$, 
$$g_{X,0}(v)\neq g_{Y,1}(v)\iff f_X(v)\neq f_Y(v) \iff set_X(v)\neq set_Y(v)\,.$$} 
Applying the quantum and randomized adversary method, we will have 
$\RLS(G)\geq \Omega(M_{\rm max})$ and $\QLS(G)\geq \Omega(M_{\rm geom})$, where
\[
M_{\rm max} = \mathop{\min_{A_X\in\widetilde{\mathcal{A}}, B_Y\in\widetilde{\mathcal{B}}, v}}_{R(A_X,B_Y)>0, set_X(v)\neq set_Y(v)} \max\set{\frac{M(A_X)}{M(A_X,v)},~\frac{M(B_Y)}{M(B_Y,v)}}
\]
\[
M_{\rm geom} = \mathop{\min_{A_X\in\widetilde{\mathcal{A}}, B_Y\in\widetilde{\mathcal{B}}, v}}_{R(A_X,B_Y)>0, set_X(v)\neq set_Y(v)} \sqrt{\frac{M(A_X)}{M(A_X,v)}\frac{M(B_Y)}{M(B_Y,v)}}\,.
\]

Let $A_X\in\widetilde{\mathcal{A}}, B_Y\in \widetilde{\mathcal{B}}$ be inputs
for which \fix{$set_X(v)\neq set_Y(v)$}. \fix{Then $v\not\in X$ or $v\not\in Y$.} 
Assuming the case $v\not\in X$, we will show $M(A_X,v)$ is small.
We have
\[
\begin{split}
M(A_X,v) 
&\leq \sum_{Y'\in \widetilde{\mathcal{D}}: set_X(v)\neq set_{Y'}(v)} w(X,Y') \\
&\leq \sum_{Y': v\in Y' } p(X)\pr[j,Z]{Z = Y' ~|~ Z_{0\to j}=X_{0\to j}} \\
&= p(X)\pr[j,Z]{v\in Z  ~|~ Z_{0\to j}=X_{0\to j}} \\
&= p(X)\pr[j,Z]{v\in Z_{j+1\to L} ~|~ Z_{0\to j}=X_{0\to j}} ~~~(\mbox{since $v\not\in X$})\\
&\leq p(X)\epsilon ~~~(\mbox{since $X$ is $\epsilon$-hitting})\,.
\end{split}
\]
In the case $v\not\in Y$, we can also obtain $M(B_Y,v)\leq p(Y)\epsilon$ due to symmetry.
Hence, 
\[
\max\set{\frac{M(A_X)}{M(A_X,v)},~\frac{M(B_Y)}{M(B_Y,v)}} \geq 0.3/\epsilon
\]
\[
\sqrt{\frac{M(A_X)}{M(A_X,v)}\frac{M(B_Y)}{M(B_Y,v)}} \geq \sqrt{0.3/\epsilon}\,.
\]
The latter inequality is obtained due to the fact that $M(A_X,v)\leq M(A_X)$ and  $M(B_Y,v)\leq M(B_Y)$.
Consequently, $M_{\rm max}=\Omega(1/\epsilon)$ and $M_{\rm geom}=\Omega(\sqrt{1/\epsilon})$, completing the proof for Theorem \ref{Theorem_Aaronson_Snakes}.
\end{proof}

\subsubsection{Proof of Proposition~\ref{Prop_TotalVariance}}
\begin{proof}
We prove by induction on $n$. The case $n=2$
can be easily obtained by applying the following simple fact:
\begin{fact}
Let $x_1, x_2, y_1, y_2$ be any real numbers in $[0,1]$. Then
\[
|x_1x_2 - y_1y_2| = |(x_1-y_1)x_2 + (x_2-y_2)y_1| \leq |x_1-y_1|x_2 + |x_2-y_2|y_1 \leq |x_1-y_1| + |x_2-y_2|\,.
\]
\end{fact}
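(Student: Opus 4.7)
The plan is to verify the displayed chain of one equality followed by two inequalities by direct algebraic manipulation; the proof is essentially already written inside the statement itself, so my task is merely to justify each link. The key ingredients are the standard \emph{add-and-subtract} identity, the triangle inequality, and the hypothesis that all four reals lie in $[0,1]$.

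First I would establish the opening equality $x_1 x_2 - y_1 y_2 = (x_1-y_1)x_2 + (x_2-y_2)y_1$. Inserting $\pm\, y_1 x_2$ on the right gives $(x_1-y_1)x_2 + y_1 x_2 - y_1 y_2 = x_1 x_2 - y_1 y_2$ after cancellation and commutativity of multiplication, so passing to absolute values produces the first equality of the display. Next, the middle inequality $|(x_1-y_1)x_2 + (x_2-y_2)y_1| \leq |x_1-y_1|\,x_2 + |x_2-y_2|\,y_1$ follows from the triangle inequality $|a+b| \leq |a|+|b|$ together with the observations $|x_2| = x_2$ and $|y_1| = y_1$, which use only the nonnegativity half of the hypothesis $x_i, y_i \in [0,1]$.

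Finally, the rightmost inequality $|x_1-y_1|\,x_2 + |x_2-y_2|\,y_1 \leq |x_1-y_1| + |x_2-y_2|$ uses the upper half of the hypothesis: since $x_2 \leq 1$ and $y_1 \leq 1$, each nonnegative term $|x_1-y_1|$ and $|x_2-y_2|$ is multiplied by a factor in $[0,1]$ on the left, which can only shrink it. Adding the two bounds $|x_1-y_1|\,x_2 \leq |x_1-y_1|$ and $|x_2-y_2|\,y_1 \leq |x_2-y_2|$ yields the claim.

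There is no real obstacle in this fact; the entire argument is a routine three-line verification. It is worth noting that the two halves of the hypothesis $x_i, y_i \in [0,1]$ each get used exactly once: the lower bound $\geq 0$ is needed to drop the absolute value signs from $x_2$ and $y_1$ in the triangle-inequality step, while the upper bound $\leq 1$ is needed to pass from the weighted sum to the unweighted sum in the final step.
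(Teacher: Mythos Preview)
Your proposal is correct and matches the paper's approach exactly: the paper presents this as a self-evident fact whose proof is already contained in the displayed chain, and you have simply (and correctly) spelled out the justification for each link---the add-and-subtract identity, the triangle inequality using nonnegativity of $x_2,y_1$, and the final bound using $x_2,y_1\leq 1$.
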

In particular, applying the above fact, we have for any pair of events $(A,B)$,
\[
\begin{split}
\Bigl|\pr{X_1\in A, X_2\in B} - \pr{Y_1\in A, Y_2\in B}\Bigr| \leq 
&\Bigl|\pr{X_1\in A}-\pr{Y_1\in A}\Bigr| +  \\
&\Bigl|\pr{X_2\in B|X_1\in A}-\pr{Y_2\in B|Y_1\in A}\Bigr|\,.
\end{split}
\] 
Recall that by definition of total variation,
\[
\|(X_1,X_2) - (Y_1,Y_2)\|_{t.v.} = \max_{A,B}\Bigl|\pr{X_1\in A, X_2\in B}-\pr{Y_1\in A, Y_2\in B}\Bigr| \quad\text{and}
\]
\[
\Delta_2 = \max_{A,B}\Bigl|\pr{X_2\in B|X_1\in A}-\pr{Y_2\in B|Y_1\in A}\Bigr| \,.
\]
Hence,
\[
\|(X_1,X_2) - (Y_1,Y_2)\|_{t.v.} \leq \|X_1-Y_1\|_{t.v.} + \Delta_2\,.
\]
Now we can apply this result and get
\[
\|(X_1,\ldots,X_n)-(Y_1,\ldots, Y_n)\|_{t.v.} \leq \|(X_1,\ldots,X_{n-1})-(Y_1,\ldots, Y_{n-1})\|_{t.v.} + \Delta_n
\]
which establishes the proposition by induction.
\end{proof}

\subsection{Upper Bounds for Local Search}
Various upper bounds for both quantum and classical query complexities have been given for general graphs.
For any graph $G$ of $N$ vertices and maximal degree $\delta$, it has been showed that
$\RLS(G)=O(\sqrt{N\delta})$ \cite{Ref_Aldous83minimization} and $\QLS(G)=O(N^{1/3}\delta^{1/6})$ \cite{Ref_Aaronson04lower}. 
The idea for designing local search algorithms in \cite{Ref_Aldous83minimization,Ref_Aaronson04lower} is 
random sampling followed by steepest descent. More specifically, these algorithms start off by sampling a subset of vertices,
find the best vertex $v$ (i.e., the one with the minimum $f$ value) 
in the sampled set, and finally performing steepest descent beginning at the chosen vertex $v$.

Zhang \cite{Ref_Zhang06new} later introduced new quantum and randomized algorithms for local search
on general graphs, providing upper bounds that depend on the graph diameter and the expansion speed.
\remove{
The main ingredients in Zhang's algorithms also include sampling and steepest descent.
However, instead of running steepest descent right away after sampling, 
they repeat the sampling procedure, 
narrowing down the sample space after each iteration to a smaller region whose boundary is ``good''
in the sense that every vertex in the boundary has large $f$ values. 
Such a boundary is to guarantee the existence of a local minimum in the sample space.
This recursion is continued until the diameter of the sample space is within a constant.
Then, beginning at the vertex chosen from the last sampling iteration, they start the steepest descent procedure, 
which is guaranteed to terminate within a constant number steps due to 
the good boundary of the last sample space.

In particular, letting $c(k)$ be the maximal size of the balls of radius $k$ in the graph $G$, i.e., 
$c(k)=\max_{v}|\set{u:\Delta(v,u)\leq k}|$, Zhang showed that
\begin{theorem}(Zhang \cite{Ref_Zhang06new})
If there exists $\alpha\geq 1$ for which $c(k)=O(k^{\alpha})$ for all $k=1,\ldots,d$, then
\[
\RLS(G)=
\left\{\begin{array}{ll}
O(d^{\alpha-1}\log\log d) & {\rm if} ~ \alpha>1\\
O(\log d\log\log d) & {\rm if} ~ \alpha=1\,.\\
\end{array}\right.
~~~~
\QLS(G)=
\left\{\begin{array}{ll}
O(d^{(\alpha-1)/2}(\log\log d)^{3/2}) & {\rm if} ~ \alpha>1\\
O(\log d\log\log d) & {\rm if} ~ \alpha=1\,.\\
\end{array}\right.
\]
where $d$ is the diameter of the graph $G$.
\end{theorem}
}
While Zhang's upper bounds can only work well for graphs with slow expansion speed, such as hypecubes,
many vertex-transitive graphs, unfortunately, do not possess this property. 
Also, 
Zhang's randomized upper bound 
is no better than $O\left(\frac{N}{d}\log\log d \right)$, and his quantum upper bound is no better than 
$O\left(\sqrt{\frac{N}{d}}(\log\log d)^{1.5} \right)$, except for the line or cycle graphs,
where $d$ is the diameter of the graph.
This means Zhang's upper bounds do not seem to beat Aaronson and Aldous's bounds,
especially for graphs with small degrees and small diameters.
Note that there are Cayley graphs of non-abelian simple groups which have constant degrees 
and have diameters no larger than $O(\log N)$. 
While Zhang's upper bounds fail for graphs of small diameters, Aldous and Aaronson's upper bounds fail for graphs of large degrees.
So, a question to ask is whether there is a better upper bound for graphs with large degrees and small diameters?

Recently, Verhoeven \cite{Ref_Verhoeven06enhanced} has proposed another deterministic algorithm and enhanced Zhang's quantum algorithm,
improving upper bounds on deterministic and quantum query complexities of Local Search
that depend on the graph's degrees and genus. Precisely, he showed that for any $N$-vertex graph $G$
of genus $g$ and maximal degree $\delta$, 
the deterministic (thus, randomized) and query complexities of Local Search on $G$ are $\delta+O(\sqrt{g})\sqrt{N}$ and
$O(\sqrt{\delta})+O(\sqrt[4]{g})\sqrt[4]{N}\log\log N$, respectively.
However, these bounds fail for the class of graphs we are caring about: vertex-transitive graphs,
since every vertex-transitive graph is regular and it has been shown that
the genus of an $N$-vertex $m$-egde connected graph is at least $\ceil{\frac{m}{6}-\frac{N}{2}+1}$ (see \cite[p114]{Ref_Mohar01graph}).

\end{document}